\newtheorem{lemma}{{Lemma}}
\newtheorem{assumption}{{ Assumption}}
\newtheorem{theorem}{{Theorem}}
\newtheorem{corollary}{{Corollary}}
\newcommand{\HRule}{\rule{\linewidth}{0.5mm}}
\def\tran{^{\mathsf{T}}}
\def\one{\mathds{1}}
\newcommand{\bp}{ \begin{proof}}
	\newcommand{\ep}{\end{proof} }
\newcommand{\Ex}{\mathbb{E}\hspace{0.05cm}}
\newcommand{\be}{\begin{equation}}
\newcommand{\ee}{\end{equation}}
\newcommand{\bqq}{\begin{eqnarray}}
\newcommand{\eqq}{\end{eqnarray}}
\newcommand{\bal}{\begin{align}}
\newcommand{\eal}{\end{align}}
\newcommand{\bqn}{\begin{eqnarray*}}
	\newcommand{\eqn}{\end{eqnarray*}}
\newcommand{\nn}{\nonumber}
\newcommand{\ba}{\left[ \begin{array}}
	\newcommand{\ea}{\\ \end{array} \right]}
\newcommand{\qd}{\hfill{$\blacksquare$}}
\newcommand{\define}{\;\stackrel{\Delta}{=}\;}
\def\h{{\boldsymbol{h}}}
\def\n{{\boldsymbol{n}}}
\def\u{{\boldsymbol{u}}}
\def\v{{\boldsymbol{v}}}
\def\w{{\boldsymbol{w}}}
\def\x{{\boldsymbol{x}}}
\def\z{{\boldsymbol{z}}}
\newcommand{\cE}{{\mathcal{E}}}
\newcommand{\cF}{{\boldsymbol{\mathcal{F}}}}
\newcommand{\cG}{{\mathcal{G}}}
\newcommand{\cN}{{\mathcal{N}}}
\newcommand{\su}{{\scriptstyle{\mathcal{U}}}}
\newcommand{\sv}{{\scriptstyle{\mathcal{V}}}}
\newcommand{\sz}{{\scriptstyle{\mathcal{Z}}}}
\newcommand{\tw}{\widetilde{\boldsymbol{w}}}
\newcommand{\grad}{{\nabla}}
\def\filt{\boldsymbol{\mathcal{F}}}
\newcommand{\eq}[1]{\begin{align}#1\end{align}}
\newcommand{\beqn}{\begin{eqnarray}}
\newcommand{\eeqn}{\end{eqnarray}}
\newcommand{\nnb}{\nonumber \\}
\newcommand{\RR}{\mathbb{R}}
\DeclareFontFamily{U}{mathx}{\hyphenchar\font45}
\DeclareFontShape{U}{mathx}{m}{n}{
	<5> <6> <7> <8> <9> <10>
	<10.95> <12> <14.4> <17.28> <20.74> <24.88>
	mathx10
}{}
\DeclareSymbolFont{mathx}{U}{mathx}{m}{n}
\DeclareMathAccent{\widebar}{0}{mathx}{"73}
\def\real{{\mathbb{R}}}
\def\Zint{{\mathchoice{\setbox1=\hbox{\sf Z}\copy1\kern-.75\wd1\box1}
		{\setbox1=\hbox{\sf Z}\copy1\kern-.75\wd1\box1}
		{\setbox1=\hbox{\scriptsize\sf Z}\copy1\kern-.75\wd1\box1}
		{\setbox1=\hbox{\scriptsize\sf Z}\copy1\kern-.75\wd1\box1}}}
\title{\huge SUPERVISED LEARNING UNDER DISTRIBUTED FEATURES}
\author{\IEEEauthorblockN{\it Bicheng Ying${}^{*\dagger}$, Kun Yuan${}^{*\dagger}$, and Ali H. Sayed${}^{\dagger}$\thanks{A short version of this work appears in the conference publication\cite{ying2018exponentially}. This work was supported in part by NSF grant CCF-1011918. Email: \{ybc, kunyuan\}@ucla.edu and ali.sayed@epfl.ch}}\\
	\vspace{0.3cm}
	\IEEEauthorblockA{\rm ${}^*$Department of Electrical Engineering,
	University of California, Los Angeles\\ 
	${}^\dagger$School of Engineering, 
	\'Ecole Polytechnique F\'ed\'erale de Lausanne, Switzerland}\vspace{-2mm}
	}
\begin{document}
\def\helvetica{phvr7t.tfm}
\def\helveticaoblique{phvro7t.tfm}
\def\helveticabold{phvb7t.tfm}
\def\helveticaboldoblique{phvbo7t.tfm}

\font\sfb=\helveticabold
=\helveticaboldoblique
\maketitle
\begin{abstract}\vspace{-0mm}
	This work studies the problem of learning under both large datasets and large-dimensional feature space scenarios. The feature information is assumed to be spread across agents in a network, where each agent observes some of the features. Through local cooperation, the agents are supposed to interact with each other to solve an inference problem and converge towards the global minimizer {\color{black} of an empirical risk.} We study this problem exclusively in the primal domain, and propose new and effective distributed solutions with guaranteed convergence to the minimizer {\color{black}with linear rate under strong convexity}. This is achieved by combining a dynamic diffusion construction, a pipeline strategy, and variance-reduced techniques. Simulation results illustrate the conclusions. \vspace{-1mm}
\end{abstract}
\begin{keywords}
	distributed features, dynamic diffusion, consensus, pipeline strategy, variance-reduced method, distributed optimization, primal solution.\vspace{-0mm}
\end{keywords}
\setlength{\abovedisplayskip}{1.2mm}
\setlength{\belowdisplayskip}{1.2mm}
\def\arraystretch{0.9}

\section{Introduction and Problem Formulation}\vspace{0.5mm}
{\color{black}
 Large-scale optimization problems are common in data-intensive machine learning problems \cite{bertsekas1999nonlinear,hastie2009elements, bottou2008tradeoffs,boyd2004convex}.  For applications, where both the size of the dataset and the dimension of the feature space are large, it is not uncommon for the dataset to be too large to be stored or even processed effectively at a single location or by a single agent. In this article, we examine the situation where the feature data is split across agents either due to privacy considerations or because they are already physically collected in a distributed manner by means of a networked architecture and aggregation of the data at a central location entails unreasonable costs.  More specifically, the entries (blocks) of the feature vector are assumed to be  distributed over a collection of $K$ networked agents, as illustrated in Fig.~\ref{fig-network}.
 For instance, in sensor network applications, \cite{rabbat2004distributed,kar2009distributed}, multiple sensors are normally employed to monitor an environment; the sensors are distributed over space and can be used to collect different measurements. Likewise, in multi-view learning problems\cite{sun2013survey, xu2013survey}, the observed model is represented by multiple feature sets. Another example is the Cournot competition problem in networked markets \cite{bimpikis2014cournot, metzler2003nash, yu2017distributed},  where individual factories have information about their local markets, which  will not share with each other.
 Distributed dictionary learning problems\cite{chen2015dictionary} also fit into this scenario if viewing the dictionary as feature.
}
\begin{figure}[t]
	\centering
	\includegraphics[scale=0.53]{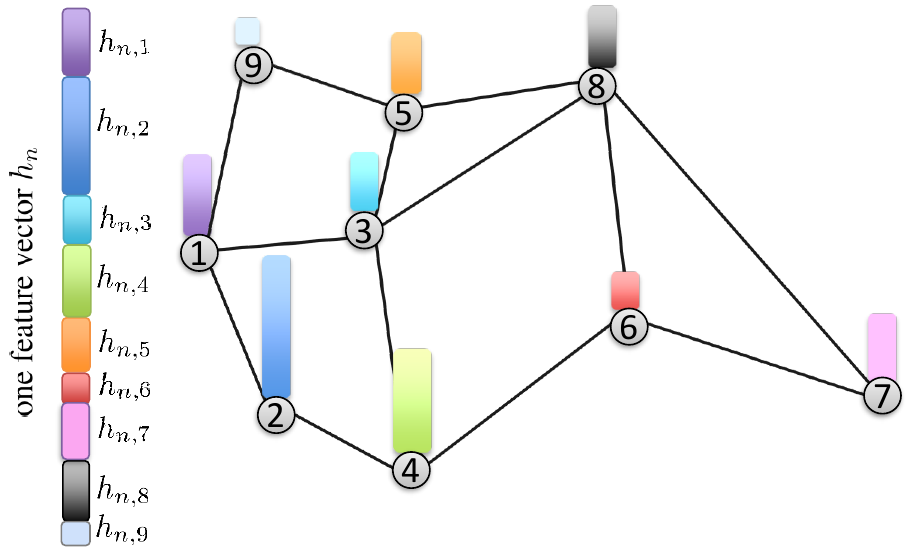}\vspace{-2mm}
	\caption{Distributing the feature across the networked agents. \label{fig-network}}\vspace{-2mm}
\end{figure}

{\color{black} In this work, we focus on empirical risk minimization problems, which, by ergodicity arguments, provide a good approximation for average risks \cite{sayed2014adaptation,  kar2009distributed,bertsekas1989parallel,duchi2011adaptive, mokhtari2016dsa, yuan2017exact1,nedic2009distributed, chen2013distributed}.  Formally, we consider an empirical risk of the form:}\vspace{-1mm}
\be
R(w) = \frac{1}{N}\sum_{n=1}^N Q\Big(h_n\tran w; \gamma_n \Big) + r(w)
\label{cost.full}
\ee
 where the unknown parameter model (or separating hyperplane) is designated by $w\in\real^{M\times1}$, while $h_n\in\real^{M\times1}$ denotes the $n$-th feature vector and $\gamma_n$ the corresponding scalar label. Moreover, the notation $Q(h\tran w;\gamma)$ refers to the loss function and is assumed to be a differentiable and convex function over $w$. In most problems of interest, the loss function is dependent on the inner product $h^{\sf T} w$ rather than the individual terms $\{h,w\}$. {\color{black} The factor $r(w)$ represents the regularization term.} We denote the minimizer of $R(w)$ in (\ref{cost.full}) by $w^{\star}$. {\color{black} Although we are assuming $w$ and $h_n$ to be column vectors, and $\gamma_n$ to be a scalar, the analysis can be easily extended to matrix quantities $W\in\real^{M\times C}$ and to vector labels $\gamma\in\real^{C\times 1}$, which we will illustrate later in the simulations. }


{\color{black} Since the entries of the feature vector are assumed distributed over $K$ agents, we partition each $h_n$, and similarly the weight vector $w$, into \ $K$ sub-vectors denoted by $\{h_{n,k}, w_k\}$}, where $k=1,2,\ldots,K$:
\be
h_n\define
\ba{c}
h_{n,1}\\\hline h_{n,2}\\[-0.2mm]\hline \vdots\\[-0.2mm] \hline h_{n,K}
\ea,\;\;\;
w\define
\ba{c}
w_1\\\hline w_2\\[-0.2mm]\hline \vdots\\[-0.2mm]\hline w_K
\ea\label{23fe.g23}
\ee
\noindent Each sub-feature vector $h_{n,k}$ and sub-vector $w_k$ are assumed  to be located at agent $k$.
{\color{black} The dimensions of $\{h_{n,k}, w_k\}$ can vary over $k$, i.e., over agents.}  In this way, the empirical risk function can be rewritten in the form 
\be  
R(w) = \frac{1}{N} \sum_{n=1}^N Q\left(\sum_{k=1}^K h_{n,k}\tran w_k; \gamma_n \right) {\color{black}+ \sum_{k=1}^K r(w_k) }
\label{eq.cost-of-sum}
\ee
{\color{black} where we are also assuming that the regularization term satisfies an additive factorization of the form
\eq{
	r(w) = \sum_{k=1}^K r(w_k) \label{eq.additive}
}
with regularization applied to each sub-vector $w_k$. This  property holds for many popular regularization choices, such as $\ell_2$, $\ell_1$, KL-divergence, etc.}
Observe that in the form \eqref{eq.cost-of-sum}, the argument of the loss function is now a sum over the inner products $h_{n,k}\tran w_k$. That is, we have a ``cost-of-sum'' form similar to what was discussed in \cite{chen2015dictionary}.\vspace{-0mm}
Our objective is to optimize (\ref{eq.cost-of-sum}) over the $\{w_k\}$ and to seek the optimal values in a distributed manner. 

\subsection{Related Works}
Problems of this type have been pursued before in the literature by using duality arguments, such as those in \cite{bertsekas1989parallel, chen2015dictionary, chang2015multi, ying2017diffusion, falsone2017dual}. {\color{black} One common way to do that is to transform problem \eqref{eq.cost-of-sum} into a constrained problem, say, as:  
	\eq{
		\min_{w\in \real^M}\;\;\;\;\;& R(w) = \frac{1}{N} \sum_{n=1}^N Q\left(z_n; \gamma_n \right) + \sum_{k=1}^K r(w_k) \label{dual.prob32}\\
		{\rm s.t.} \;\;\;\;\;\;\;& z_{n} = \sum_{k=1}^K h_{n,k}\tran w_k \;\;\;\;\; n=1,2,\ldots,N\nn
	}
{\color{black} Introducing the dual variable $y$, we get the Lagrangian function:
	\eq {
		L(y, z, w) =& \frac{1}{N} \sum_{n=1}^N Q\left(z_n; \gamma_n \right) + \frac{1}{N} \sum_{n=1}^N y_n z_{n} \nn\\
		& -\frac{1}{N} \sum_{n=1}^N y_n\sum_{k=1}^K h_{n,k}\tran w_k + \sum_{k=1}^K r(w_k) \label{23ijg/sd}
	}
}
Next, exploiting a duality argument, problem \eqref{dual.prob32} is equivalent to solving the following dual problem:
\eq{
	\min_{y \in \real^N} D(y) \!=\! \sum_{k=1}^K\left\{\!\frac{K}{N} \sum_{n=1}^N\! Q^{\star}\!\left(-y_n;\gamma_n\right) \! +  \! r^\star\!\left(\frac{1}{N}\sum_{n=1}^N y_n h_{n,k}\!\right) \!\right\}
	\label{dual.prob}
}
where the scalar $y_n$ denotes the dual variable corresponding to the $n-$th constraint, and $Q^\star(\cdot)$ and $r^\star(\cdot)$ represent the conjugate functions, i.e., $f^\star(y) = \sup_{x} (y\tran x - f(x))$, of $Q(\cdot)$ and $r(\cdot)$, respectively. Note that the function in \eqref{dual.prob} has the format of a ``sum-of-cost'' and each term inside the summation can be computed by each agent alone. Therefore, problem \eqref{dual.prob} can be solved in a number of traditional distributed algorithms \cite{yuan2017exact1,shi2015extra,nedic2009distributed,kar2011convergence,sayed2014adaptation}.
}
However, the resulting algorithms suffer from some limitations. {\color{black} One limitation is that the term $\frac{1}{N}\sum_{n=1}^N y_n h_{n,k}$ inside of $r^{\star}(x)$ has complexity $O(N)$ to compute.
}
Another limitation is that the resulting algorithms rely on the use of conjugate functions, which are not always available in closed form; this will greatly depend on the nature of the loss function $Q(\cdot)$. {\color{black} 	This limitation is worse for nonlinear models, say, for:
		\eq{
			R(w) = \frac{1}{N}\sum_{n=1}^N Q\Big(\sum_{k=1}^Kf(w_k, h_{n,k}); \gamma_n \Big) + r(w)
		}
		where $f(w_k, h_{n,k})$ is some nonlinear function. 
		These difficulties do not arise if we pursue a solution directly in the primal domain. For example, the case of nonlinear models can be handled through the chain rule of differentiation (as in backpropagation),  when deriving stochastic gradient algorithms. Furthermore, we are often interested more directly in the primal rather than the dual variable. 
		
	}
{\color{black} With regards to the large feature space, one may be motivated to consider coordinate descent techniques \cite{luo1992convergence,tseng2001convergence}, which pursue optimization one coordinate $y_n$ at a time. However, these techniques still face difficulties in both the primal and dual domains. For instance, in the primal domain \cite{mokhtari2016doubly,notarnicola2017distributed, Wang2018Coordinate}, they will generally require two time-scales: one scale governs the rate for updating the gradients and a second faster scale for running averaging iterations multiple times. This feature limits the rate at which data can be sampled because the inner calculation will need to be completed before the computation of the next datum. The same difficulties mentioned above for the dual methods will again arise if coordinate descent solutions are implemented in the dual domain \cite{Shalev-Shwartz2013,shalev2016sdca,ying2017diffusion}. For these reasons, the approach proposed in this work does not rely directly on coordinate descent implementations. 
}


Other useful approaches to solving problems of the form (\ref{eq.cost-of-sum}) are the Alternating Direction Method of Multipliers (ADMM) \cite{boyd2011distributed, chang2015multi,shi2014linear} and primal dual-methods \cite{chang2014distributed,mota2012distributed,yuan2017exact1,shi2015extra}. These techniques have good convergence properties but continue to suffer from high computational costs and two-time scale communications. {\color{black} This can be observed from \eqref{23ijg/sd} where we see that the gradient relative to the dual variable involves the term $\sum_{k=1}^K h_{n,k}\tran w_k$. This term requires extra communication. }

\subsection{Novelty and Contributions}

In this work, we propose a {\em stochastic} solution method to  (\ref{eq.cost-of-sum}) that operates directly in the {\em primal} domain. By avoiding the dual formulation, we will arrive at a simple and effective method even for large scale applications. We exploit the idea of dynamic consensus algorithm\cite{zhu2010discrete, freeman2006stability}, which has been adopted in the distributed optimization algorithms to track the average of gradients, see {\cite{di2016next,notarnicola2017distributed, nedic2017achieving,pu2018push,xin2018linear}}. Meanwhile, we are interested in tracking the sum of score, $\sum_{k=1}^K h_{n,k}\tran w_k$, due to the different problem setting.
More importantly, we will show that the proposed method is able to {\color{black} converge at a {\em linear} rate to the {\em exact} minimizer of  the empirical risk} $R(w)$ even under constant step-size learning. 
{\color{black} We will also exploit variance-reduced techniques \cite{defazio2014saga} and a pipeline strategy \cite{hennessy2011computer} to obtain a reduced complexity algorithm that requires only $O(1)$ operations per iteration.}
The algorithm will not require two (or separate) time-scales  and will not necessitate the solution of auxiliary sub-optimization problems as is  common in prior methods in the literature. 

Problems similar to \eqref{eq.cost-of-sum} were also studied in \cite{sundhar2012new} using similar primal methods like us but in a deterministic setting, but the approach is not well-suited for big-data applications.

{\color{black}
{\it Notation}: We use plain letters for
deterministic variables, and boldface letters for random variables. We also use $\Ex_x$ to denote the expectation with respect to $x$, ${\rm col}\{x_1,\cdots,x_n\}$ to denote a column vector formed by stacking $x_1,\cdots,x_n$, $(\cdot)\tran$ to denote transposition, and $\|\cdot\|$ for
the 2-norm of a matrix or the Euclidean norm of a vector.
Through the paper, we use the subscript $n$ as the index of data, the subscript $i, j$ as the index of iteration/time, and $k, \ell$ as the index of agent. We also put $i, j$ as the superscript with the same meaning, i.e.,  the index of iteration/time. The notation $\one_N= {\rm col}\{1,\ldots,1\}\in\real^N$. 
}

\section{Preliminary Na\text{\"i}ve Solution}
We first propose a simple and na\"{\i}ve solution, which will be used to motivate the more advanced algorithms in later sections. 
{\color{black}
\subsection{Networked Agent Model and Consensus Strategy}
To begin with, we introduce several preliminary concepts that will be exploited in later sections. 

We consider the graph model shown in Fig.~\ref{fig-network}. In this construction, the communication network of the agents is modeled as a fixed directed graph $\cG = (\{1,\cdots,K\}\,;\, \cE)$, where $\cE\subseteq\{1,\cdots,K\}\times \{1,\cdots,K\}$ is the set of edges.
The edge $(\ell, k)$ means that agent $\ell$ can send a message to agent $k$, where we associate the weight $a_{\ell k}$ as a nonnegative factor that scales the information from agent $\ell$ to agent $k$. 
We assume the combination matrix $A=[a_{\ell k}]$ is symmetric and doubly-stochastic, i.e.,\vspace{-1mm}
\be 
\sum_{\ell=1}^K a_{\ell k}= 1,\;\;\; \sum_{k=1}^K a_{\ell k}= 1
\ee
We also assume that  $a_{kk}>0$ for at least one agent $k$ and the underlying graph $\cG$ is strongly connected.

Now assume there is a signal $d_k$ at each agent $k$. Then, a well-studied and traditional algorithm for the agents to learn the average value of all the $\{d_k\}$ signals is to run the consensus iteration\cite{kar2011convergence,sayed2014adaptation,nedic2009distributed, boyd2006randomized}:
\eq{
	w_{i+1, k} = \sum_{\ell \in \cN_k} a_{\ell k} w_{i,\ell},\;\;\;{\rm where\ } w_{0, k} = d_k \label{static.consensus}
}
where the notation ${\cal N}_k$ denotes the set of neighbors of agent $k$. In this way, each agents starts from its own observation vector $d_k$ and continually averages the state values of its neighbors. After sufficient  iterations, it is well-known that 
\eq{
	w_{i,k}\to \frac{1}{K}\sum_{k=1}^K d_{k} 
} under some mild conditions on $A$ \cite{sayed2014adaptive,boyd2006randomized,horn1990matrix,pillai2005perron, nedic2009distributed}. 

}

\subsection{Na\text{\"i}ve Solution}
Now, let us consider the problem of minimizing (\ref{eq.cost-of-sum}) by means of a stochastic gradient recursion. Let $\alpha_{n,k}=h_{n,k}\tran w_k$ denote the inner product that is available at agent $k$ at time $n$ and define \vspace{-1mm}
\eq{\label{zn-global-sum}
	z_{n}\define\sum_{k=1}^{K}\alpha_{n,k}\\[-7mm]\nn
}
which is the argument of $Q(\cdot)$ in (\ref{eq.cost-of-sum}):
\eq{
	R(w) = \frac{1}{N} \sum_{n=1}^N Q\left(z_{n}; \gamma_n \right) + r(w)\label{eq.cost-of-sum2}
} 
\noindent If we denote the average of the  local inner products by \eq{
\bar{\alpha}_n\define \frac{1}{K}\sum_{k=1}^{K} \alpha_{n,k}}
then the variable $z_n$ is a scaled multiple of $\bar{\alpha}_n$, namely, $z_n=K\bar{\alpha}_n$. Now, the stochastic-gradient step to solving  (\ref{eq.cost-of-sum2}) will involve approximating the true gradient vector of $R(w)$ by the gradient vector of the loss function evaluated at some randomly selected data pair $(\z_{\n_i},\gamma_{\n_i})$, where $\n_i$ at iteration $i$ denotes the index of the sample pair  selected uniformly at random from the index set 
$\{1,2,\ldots, N\}$. Doing so, the stochastic gradient recursion will take the form:
\eq{\label{single-sgd}
	\w_{i+1} 
	&= \w_{i} - \mu \grad_z Q(\z_{\n_i};\gamma_{\n_i})h_{\n_i} - \mu \nabla_w r(\w_i)
}
Note that $\n_i$ is independent of the iterates $\{\w_j\}_{j=0}^i$. Recalling that $h_n$ and $w$ are partitioned into $K$ blocks, we can decompose \eqref{single-sgd} into $K$ parallel recursions run at the local agents:
\eq{\label{block-sgd}
	\w_{i+1, k} = \w_{i,k} - \mu \nabla_{z}Q\left(\z_{\n_i}; \gamma_{\n_i} \right) h_{\n_i,k}- \mu \nabla_w r(\w_{i,k})
}

One main problem with \eqref{block-sgd} is that it is {\em not} a distributed solution because agents need to calculate $\grad_z Q(z;\gamma)$ at $z$ whose value depends on all sub-vectors $\{w_k\}$ from all agents and not just on $w_k$ from agent $k$. This difficulty suggests one initial solution method. 

Since the desired variable $z_n$ is proportional to the average value $\bar{\alpha}_n$, then a consensus-type construction can be used to approximate this average. 
For some total number of iterations $J$, each  agent would start from $\bar{\alpha}_{n,k}^{(0)}=\alpha_{n,k}$ and repeat the following calculations $J$ times:
\be
\bar{\alpha}_{n,k}^{(j+1)}\;\leftarrow\;
\sum_{\ell\in{\cal N}_k}a_{\ell k} \bar{\alpha}_{n,\ell}^{(j)},\;\;\;j=0,1,\ldots, J-1 \label{consensus}
\ee
However, this mode of operation requires the agents to complete $J$ consensus updates between two data arrivals and requires a two-time scale operation: {\color{black} a faster time-scale for the consensus iterations and a slower time-scale for the data sampling and computing the gradient.} One  simplification  is to set $J=1$ and to have each agent perform only one single combination step to generate the variable:
\eq{
	\widehat{\z}_{\n_i,k} = \sum_{\ell \in \cN_k} a_{\ell k} K h_{\n_i, \ell}\tran w_{i,\ell} \label{giow.hhj}
}
where we are expressing the result of this single combination by $\widehat{\z}_{\n_i,k}$ to indicate that this is the estimate for $z_{\n_i}$ that is computed at agent $k$ at iteration $i$. 
Observe also that we are scaling the quantity inside the sum by $K$ since, as explained before, $z_n=K\bar{\alpha}_n$. 
We list the resulting algorithm in (\ref{naive.1})--(\ref{naive.3}). 

Observe that this implementation requires all agents to use the same random index $\n_i$ at iteration $i$. Although this requirement may appear restrictive, it can still be implemented in distributed architectures. For example, each agent can be set with the same random seed so that they can generate the same index variable $\n_i$ at iteration $i$. {\color{black} To agree on the same random seed in a fully distributed manner, one way is to run the consensus algorithm on the seed in the setting phase. Specifically, each agent generates a random seed number, then runs the consensus algorithm until it converges and rounds the result to the nearest integer.}
Alternatively, agents can  sample the data in a cyclic manner instead of uniform sampling. But in the main body of this paper, we assume each agent $k$ will sample the same index $\n_i$ at iteration $i$ for simplicity. \vspace{-2.5mm}

\begin{table}[t]
	\vspace{1mm}
	\noindent \HRule\\
	\noindent \textbf{\small Algorithm 1 (Na\"{\i}ve feature-distributed method for agent $k$)} \vspace{-2mm}\\[-1mm]
	\HRule\\
	{\color{black}{\bf Initialization:} \mbox{Set $w_{0,k}=0$.}}\\
	{\bf Repeat for $i=1,2,\ldots$}:\vspace{-.5mm}
	\begin{align}
	\n_i \sim \;&{\cal U}[1, N]  \;\;\;\;\;(\mbox{uniformly sampled}) \label{naive.1}\\
	\widehat{\z}_{\n_i,k} =\;& \sum_{\ell \in \cN_k} a_{\ell k} (K h_{\n_i, k}\tran w_{i,k}) \label{hat-z} \\
	\hspace{-0.5mm} \w_{i+1,k} =\;& \w_{i,k}\hspace{-1mm}-\hspace{-1mm} \mu\nabla_{z}Q\Big(\widehat{\z}_{\n_i,k}; \gamma_{\n_i} \Big)h_{\n_i,k}  -\mu\nabla_w r(\w_{i,k})
	\label{naive.3}
	\\[-5mm]\nn
	\end{align}
	\mbox{\bf End}\\[-3mm]
	\HRule\vspace{-5mm}
\end{table}

\subsection{Limitations}
Algorithm 1 is easy to  implement. However, it suffers from two major drawbacks. First, the variable $\widehat{\z}_{\n_i,k}$ generated by the combination step  \eqref{hat-z} is not generally a good approximation  for the global variable $\z_{\n_i}$. This approximation error affects the stability of the algorithm and requires the use of very small step-sizes. A second drawback is that the stochastic-gradient implementation (\ref{naive.1})--(\ref{naive.3}) will converge to a small neighborhood around the exact minimizer rather than to the exact minimizer itself \cite{yuan2016stochastic, sayed2014adaptation}.  In the following sections, we will design a more effective solution.

\section{Correction the Approximation  Error}\label{sec-consensus-correction}
\subsection{Dynamic Diffusion Strategy}
Motivated by the dynamic average consensus method \cite{zhu2010discrete, freeman2006stability}, we will design a stochastic diffusion-based algorithm to correct the error introduced by \eqref{hat-z}. {\color{black}The motivation for dynamic diffusion is similar to what been used before in the distributed optimization literature \cite{di2016next, nedic2017achieving,pu2018push}, with two main differences. First, in the current setting, we will be interested in tracking several different quantities and not a single quantity. Second, we will need to develop a stochastic (rather than deterministic) version to cope with stochasticity in inner products and/or gradients; thus resulting in an $O(1)$ operation per iteration.}

To motivate the dynamic diffusion algorithm, let us step back and assume that each agent $k$ in the network is observing some dynamic input signal, $d_{i, k} \in \RR^N$, that changes with time $i$. Assume we want to develop a scheme to ensure that each agent $k$ is able to track  the average of all local signals, i.e., $\bar{d}_i = \frac{1}{K}\sum_{k=1}^{K}d_{i,k}$. For that purpose,  we consider an optimization problem of the form:\vspace{-0mm}
\eq{
	\min_{x\in \RR^P}\; C_{i}(x) = \sum_{k=1}^K\frac{1}{2}\|x - d_{i,k}\|^2 \label{gwh89.ds}
}
where the cost function $C_i(x)$ is changing with time $i$. The global minimizer of $C_i(x)$ is the desired average $\bar{r}_i$. However, we would like the agents to attain this solution in a distributed fashion. To this end, we can apply  the exact diffusion algorithm developed in \cite{yuan2017exact1, yuan2017exact2} to solve \eqref{gwh89.ds}. For this case, the algorithm simplifies to the following recursions:
\eq{
	{\rm Adapt:}\;\;\;\;\psi_{i+1,k} =&\ x_{i,k} - \mu (x_{i,k} - d_{i+1,k}) \label{ed-adapt}\\
	{\rm Correct:}\;\;\;\;\phi_{i+1,k} =&\ \psi_{i+1,k} + x_{i,k} -  \psi_{i,k} \label{ed-correct}\\
	{\rm Combine:}\;\;\;\;x_{i+1,k} =&\ \sum_{\ell \in \cN_k}a_{\ell k} \phi_{i+1,\ell} \label{ed-combin}
}\noindent Each agent $k$ has a state variable $x_{i,k}$ at time $i$. Step (\ref{ed-adapt}) uses the input signal $d_{i+1,k}$ at agent $k$ to update its state to the intermediate value $\psi_{i+1,k}$. The second step (\ref{ed-correct}) corrects $\psi_{i+1,k}$ to $\phi_{i+1,k}$, and the third step (\ref{ed-combin}) combines the intermediate values in the neighborhood of agent $k$ to obtain the update state $x_{i+1,k}$. The process continues in this manner at all agents. Based on the results from \cite{yuan2017exact2} applied to (\ref{gwh89.ds}), we can set $\mu=1$ in \eqref{ed-adapt} and combine three recursions to get
\eq{\label{eq.dynamic-average-consensus}
	x_{i+1, k} = \sum_{\ell\in \cN_k} a_{\ell k} \left(x_{i, \ell} + d_{i+1, \ell} - d_{i, \ell}\right) 
}
with $x_{0,k} = d_{0,k}$ for any $k$. It can be shown that if the signals change slowly, the $x_{i,k}$ will track the mean $\bar{d_i}=\frac{1}{K}\sum_{k=1}^{K}d_{i,k}$ well\cite{zhu2010discrete, freeman2006stability}. Also, using induction and the initial boundary conditions, it is easy to verify that \eqref{eq.dynamic-average-consensus} has the unbiasedness property:
\eq{
	\sum_{k=1}^Kx_{i, k} = \sum_{k=1}^Kd_{i, k},\;\;\;\forall i>0\label{2gw}
} In this paper, we are interested in a second useful property that:
\eq{
	\;\;\;\lim_{i\to\infty} x_{i, k} = \frac{1}{K}\sum_{k=1}^{K} d_{k},\;\;\; {\rm when \ }	\lim_{i\to\infty} d_{i, k} = d_{k},\;\;\; \forall k 
}
This means if the signals $d_{i,k}$ converge, then the $x_{i,k}$ of all agents will converge to the mean of the limit values. We refer to \eqref{eq.dynamic-average-consensus} as the dynamic diffusion method.
%
%
%

We now apply this intermediate result to the earlier recursion (\ref{block-sgd}) to transform it into a distributed solution.  Recall that there we  need to evaluate the variable \vspace{-0mm}
\eq{\label{xcnwe8}
	\z_{\n_i} = \sum_{k=1}^{K} h_{\n_i,k}\tran \w_{i,k}
} 
Calculating this quantity is similar to solving problem \eqref{gwh89.ds}, where each $d_{i,k}$ corresponds to the inner product $h_{\n_i,k}\tran \w_{i,k}$. However, there is one key difference: the signal $h_{\n_i}$ is not {\em deterministic} but {\em stochastic} and it varies randomly with the data index $\n_i$. 
At any particular iteration $i$, we do not know beforehand which random index $\n_i$ is going to be chosen. This suggests that in principle we should keep track of $N$ variables $z_n$, one for each possible $n=1,2,\ldots,N$. {\color{black} For large datasets, this is of course too costly to implement it.} Instead, we propose a more efficient solution where the data is sparsely sampled. Assume first, for the sake of argument only, that we move ahead and compute the variable $z_n$ for every possible value of $n$. If we do so, we would need to repeat construction (\ref{eq.dynamic-average-consensus}) a total of $N$ times at each node $k$, one for each $n$, as follows:
\eq{
\hspace{-4mm}	\z^{i+1}_{1,k} &= \!\sum_{\ell \in \cN_k}a_{\ell k} \left(\z^i_{1,\ell} + K h_{1,\ell}\tran \w_{i,\ell} - Kh_{1,\ell}\tran \w_{i-1,\ell} \right)\label{238dh-1}\\
\hspace{-4mm}		\z^{i+1}_{2,k} &= \!\sum_{\ell \in \cN_k}a_{\ell k} \left(\z^i_{2,\ell} + K h_{2,\ell}\tran \w_{i,\ell} - Kh_{2,\ell}\tran \w_{i-1,\ell} \right)\\[-2mm]
	&\vdots\nn\\
\hspace{-4mm}	\z^{i+1}_{N,k} &= \!\sum_{\ell \in \cN_k}a_{\ell k} \left(\z^i_{N,\ell} + K h_{N,\ell}\tran \w_{i,\ell} - Kh_{N,\ell}\tran \w_{i-1,\ell} \right)\label{238dh-3}
}
In this description, we are adding a superscript $i$ to each $\z^{i}_{n,k}$ to indicate the iteration index. In this way, each $\z_{n,k}^i$ will be able to track the sum $\sum_{k=1}^{K}h_{n,k}\tran w_{i,k}$. However, since the data size $N$ is usually very large, 
it is too expensive to communicate and update all $\{\z_{n,k}\}_{n=1}^N$ per iteration. We  propose a stochastic algorithm in which only {\em one} datum $h_{\n_i, k}$ is selected at iteration $i$ and only the corresponding entry  $z_{\n_i,k}^{i+1}$ be updated while all other $\z_{n,k}^{i+1}$ will stay unchanged for $n\neq \n_i$: 
\eq{
	\begin{cases}
		\z^{i+1}_{\n_i,k} &\hspace{-3mm}= \displaystyle \sum_{\ell \in \cN_k}a_{\ell k} \left(\z^j_{\n_i,\ell} + K h_{\n_i,\ell}\tran \w_{i,\ell} - Kh_{\n_i,\ell}\tran \w_{j-1,\ell} \right)\\
		\z^{i+1}_{n,k} & \hspace{-3mm}= \z^{i}_{n,k},\;\;\;\;\;\;\;\;\;\;n\neq\n_i  
	\end{cases}\raisetag{14pt}\label{2389hgdn}
}
where the index $j$ in the first equation refers to the most recent iteration where the same index $\n_i$ was chosen the last time.  Note that the value $j$ depends on $\n_i$ and the history of sampling, and therefore we need to store the inner product value that is associated with it.  To fetch $\z_{n,\ell}^j$ and  $Kh_{\n_i,\ell}\tran \w_{j-1,\ell}$ easily, we introduce two auxiliary variables:
\eq{
	\u^i_{\n_i,\ell} \leftarrow\,& \z^j_{\n_i,\ell},\;\;\;\;\v^i_{\n_i,\ell} \leftarrow\, Kh_{\n_j,\ell}\tran \w_{j-1,\ell}
}
{\color{black} If we view $\{\z^{i+1}_{1,k}, \z^{i+1}_{2,k}, \ldots, \z^{i+1}_{N,k}\}$ as one long vector, the update \eqref{238dh-1}---\eqref{238dh-3} resembles a  coordinate descent algorithm\cite{tseng2001convergence,notarnicola2017distributed}. 
}

{\color{black}
An unbiasedness property similar to \eqref{2gw} will continue to hold: 
\eq{
	\sum_{k=1}^K\u^i_{n,k} = \sum_{k=1}^K\v^i_{n,k} = \sum_{k=1}^K Kh_{n,k}\tran \w_{j-1,k},\;\;\;\;\forall \, n, i>0\label{dd.unbiased}
}
It is easy to verify the validity  of \eqref{dd.unbiased} by taking the summation over \eqref{2389hgdn} and combining the initialization conditions.

}

\subsection{Variance-Reduction Algorithm}
We can enhance the algorithm further by  accounting for the gradient noise that is present in (\ref{single-sgd}): this noise is the difference between the gradient of the risk function $R(w)$, which is unavailable,  and the gradient of the loss function that is used in (\ref{single-sgd}).  It is known, e.g., from \cite{polyak1987introduction,sayed2014adaptation,yuan2016stochastic,nedic2001convergence} that under constant step-size adaptation, and due to this gradient noise, recursion \eqref{single-sgd} will only approach an $O(\mu)-$neighborhood around the global minimizer of \eqref{eq.cost-of-sum}. We can modify the recursion to ensure convergence to the exact minimizer as follows. 

There is a family of variance-reduction algorithms such as SVRG\cite{johnson2013accelerating}, SAGA\cite{defazio2014saga}, and AVRG\cite{ying2017convergence} that can approach the exact solution {\color{black} of the empirical risk function} with constant step-size. In this work, we exploit the SAGA construction because the variables $\{\u_{n,k}\}$ can readily be used in that implementation. 
{\color{black} Let us consider an agent $k$ in a non-cooperative scenario where the vanilla SAGA recursion would be 
\eq{
	\hspace{-2mm}\w_{i+1, k}& \!=\! \w_{i,k}\!-\!\mu \hspace{-0.5mm}\grad_z Q( h\tran_{\n_i,k}\w_{i,k};\gamma_{\n_i}) h_{\n_i,k} \label{eq.saga-1}\\
	&  \;\;\;\;+ \mu\grad_z Q(\u_{\n_i, k}^i;\gamma_{\n_i}) h_{\n_i, k}\nn\\
	& \;\;\;\; - \frac{\mu}{N}\sum_{n=1}^{N}\grad_z Q(\u_{n, k}^i;\gamma_{n}) h_{n, k}  \hspace{-0.5mm} \nn\\
	\u_{n,k}^{i+1} = &
	\begin{cases}
		h\tran_{\n_i,k}\w_{i,k}, & \hspace{1.mm}\mbox{if $n=\n_i$} \\
		\u_{n,k}^{i}, & \hspace{1.mm}\mbox{otherwise}
	\end{cases}
}
It is proved in \cite{defazio2014saga} that the variance of the gradient noise, i.e., the difference between the gradient step in \ref{eq.saga-1} and the full gradient, introduced in SAGA 
will vanish  in expectation. Therefore, SAGA will converge to the exact solution of problem \eqref{cost.full}.
Also, note that the $N$-summation term $\sum_{n=1}^{N}\grad_z Q(\u_{n}^i;\gamma_{n}) h_{n}$ can be calculated in an online manner since only one term is updated, i.e.,
\eq{
	&\hspace{-3mm}\sum_{n=1}^{N}\grad_z Q(\u_{n}^{i+1};\gamma_{n}) h_{n}\\
	 =\!& \sum_{n=1}^{N}\!\grad_{\!\!z}Q(\u_{n}^i;\gamma_{n}) h_{n}\!-\! \grad_{\!\!z} Q(\u_{\n_i}^i;\gamma_{\n_i}) h_{\n_i} \! +\! \grad_{\!\!z} Q(\u_{\n_i}^{i+1};\gamma_{\n_i}) h_{\n_i} \nn
}
This online calculation results  in $O(1)$ complexity per iteration. 

Note that the vanilla SAGA needs to store the gradient $\grad_z Q(\u_{n, k}^i;\gamma_{n})h_{n, k}$ for $n=1,2,\ldots,N$. However, we have already stored $\u_{n,k}^i$ and the additional storage of the gradient is unnecessary.
}

Hence, the stochastic gradient recursion(\ref{block-sgd}) at each agent $k$ will be modified to (\ref{w-algorithm3}) with two correction terms.
The resulting algorithm is summarized in Algorithm 2.
\smallskip

\begin{table}[htbp]
	\noindent\HRule\\
	\textbf{Algorithm 2 [Variance-reduced dynamic diffusion ($\rm\bf VRD^2$) for learning from distributed features]}\\[-2mm]
	\noindent\HRule\\
	{\bf Initialization:} \mbox{Set $w_{0,k}=0$; $\u_{n,k}^0=0;\ \v_{n,k}^0=0$.} \\
	{\bf Repeat for $i=1,2,\ldots$}:\vspace{-1mm}
	\begin{align}
	\n_i \sim \;&{\cal U}[1, N] \;\;\;\;\;(\mbox{uniformly sampled})\\
	\z_{\n_i,k} =& \sum_{\ell \in \cN_k}\!a_{\ell k}\! \left(\u^i_{\n_i,\ell} \!+\! K h_{\n_i,\ell}\tran \w_{i,\ell} \!-\!  \v^i_{\n_i,\ell}\right) \label{z-algorithm3}\\
	\hspace{-2mm} \w_{i+1,k} =\;& \w_{i,k} \hspace{-1mm}-\hspace{-1mm} \mu\Bigg\{\!\!\Big[\nabla_{z}Q\hspace{-0.5mm}\big(\hspace{-0.5mm}\z_{\n_{i}, k};\hspace{-0.5mm} \gamma_{\n_{i}}\hspace{-0.5mm}\big)\hspace{-1mm} -  \hspace{-1mm}\nabla_{z}Q\hspace{-0.5mm}\big(\hspace{-0.5mm}\u^i_{\n_i,k};\hspace{-0.5mm}\gamma_{\n_{i}}\hspace{-0.5mm}\big)\hspace{-0.5mm} \Big]h_{\n_i,k} \nnb
	&\hspace{1.cm} + \frac{1}{N}\sum_{n=1}^{N}\nabla_{z}Q\Big(\u^i_{n,k}; \gamma_{n}\Big)h_{n,k}+\nabla_w r(\w_{i,k})\!\Bigg\} \label{w-algorithm3}\\
	\u_{n,k}^{i+1} = &
	\begin{cases}
	\z^{i+1}_{\n_i,k}, & \hspace{1cm}\mbox{if $n=\n_i$} \\
	\u_{n,k}^{i}, & \hspace{1cm}\mbox{otherwise}
	\end{cases}\\
	\v_{n,k}^{i+1} = &
	\begin{cases}
	Kh_{\n_i,k}\tran \w_{i, k}, & \mbox{if $n=\n_i$} \\
	\v_{n,k}^{i}, & \mbox{otherwise}
	\end{cases}\\[-7mm]\nn
	\end{align}
	\mbox{\bf End}\\[-2mm]
	\noindent\HRule\vspace{-2mm}
\end{table}



\section{Acceleration With Pipeline}
 Algorithm 2 can be shown to converge to the solution of problem \eqref{cost.full} for sufficiently small step-sizes. However, it is observed in numerical experiments that its convergence rate can be slow. One reason is that the variable $\z_{n,k}$ generated by \eqref{z-algorithm3} converges slowly to $\sum_{k=1}^{K}h_{n,k}\tran \w_{i,k}$. To accelerate  convergence, it is necessary to run \eqref{z-algorithm3} multiple times before the gradient descent step \eqref{w-algorithm3}, which, however, will take us back to  in a two-time-scale algorithm. In this section, we propose a pipeline method that accelerates the convergence of  $\z_{n,k}$ while maintaining the one-time-scale structure.



%
%
%
%

A pipeline is  a set of data processing elements connected in series, where the output of one element is the input to the next element\cite{hennessy2011computer}. We assume each agent $k$ stores $J$ variables at iteration $i$:
\eq{
	[\z_{\n_i,k}^{(0)}, \z_{\n_{i-1},k}^{(1)}, \cdots \z_{\n_{i-J+1},k}^{(J-1)}] \in \RR^J\label{state.vec}
}
At every iteration, agent  $k$ runs a {\em one-step} average consensus recursion on its state vector (\ref{state.vec}):
\eq{\label{consensus-pipeline}
	&\hspace{-5mm}[\z_{\n_i,k}^{(1)}, \z_{\n_{i-1},k}^{(2)}, \cdots \z_{\n_{i-J+1},k}^{(J)}] \nn\\
	&= \sum_{\ell \in \cN_k} a_{\ell k } [\z_{\n_i,\ell}^{(0)}, \z_{\n_{i-1},\ell}^{(1)}, \cdots \z_{\n_{i-J+1},\ell}^{(J-1)}] \\[-7mm]\nn
}
Then,  agent $k$ pops up the variable $\z_{\n_{i-J+1},k}^{(J)}$ from memory and uses it to continue the stochastic gradient descent steps. Note that the variable  $\z_{n,k}^{(J)}$ can be interpreted as the result of applying $J$ consensus iterations and, therefore, it is a better approximation for  $\sum_{k=1}^{K}h_{n,k}\tran \w_{n,k}$. At iteration $i+1$, agent $k$ will push a new variable $\z_{\n_{i+1},k}^0 = K \h_{\n_{i+1}, k}\tran \w_{i+1, k}$ into the buffer
and update its state to
\eq{\label{sns8}
	[\z_{\n_{i+1},k}^{(0)}, \z_{\n_{i},k}^{(1)}, \cdots \z_{\n_{i-J+2},k}^{(J-1)}] \in \RR^J.
}\vspace{-3mm}
\begin{figure}[t]
	\centering
	\includegraphics[scale=0.5]{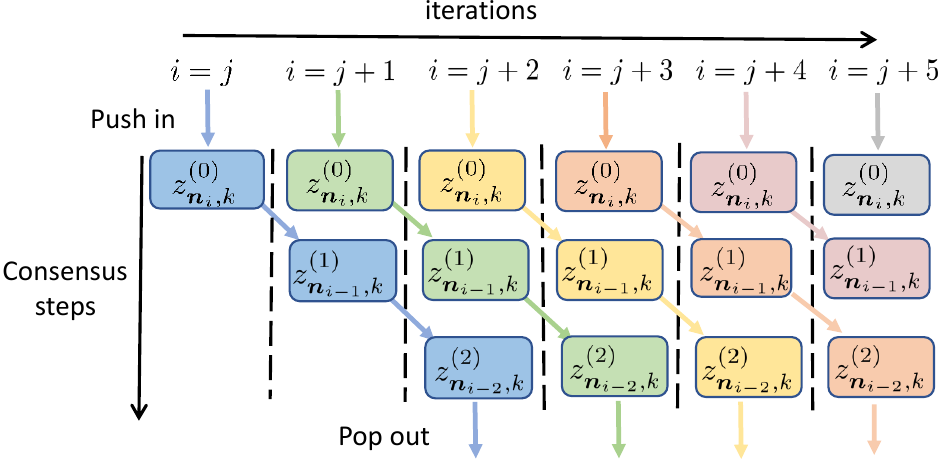}\vspace{-1mm}
	\caption{Illustration of the pipeline strategy with buffer length $J=3$.}\vspace{-1mm}\label{fig.illustration}
\end{figure}

Recursion \eqref{consensus-pipeline} employs the pipeline strategy. For example, variable $\z_{\n_i,k}^{(1)}$ is updated at iteration $i$. This new output $\z_{\n_i,k}^{(1)}$ will become  the second input at iteration $i$ and is used to produce the output $\z_{\n_i,k}^{(2)}$. Next, the output $\z_{\n_i,k}^{(2)}$ will be the third input at iteration $i+2$ and is used to produce the output $\z_{\n_i,k}^{(3)}$. If we follow this procedure, the output $z_{\n_i,k}^{(J)}$ will be reached at iteration $i+J-1$. At that time, we can pop up $z_{\n_i,k}^{(J)}$ and use it in the stochastic gradient update. The pipeline procedure is summarized in the ``Pipeline function'' shown above, and Fig.~\ref{fig.illustration} illustrates the pipeline strategy. \\[-0mm]
\begin{table}[!h]
\noindent\HRule\vspace{-0.3cm}
\begin{center}
	\bf Pipeline function
\end{center}\vspace{-0.5cm}
\noindent\HRule\\
{\bf Initialization}: \mbox{$\z_{\n_i,k}=0$ for any $i\le 0$}\\
{\bf Function } {{\rm  Pipeline}} \big($\z^{(0)}_{\n_i,k}$, $\v^{i+J-1}_{\n_i,k}$\big)
\eq{
	&\hspace{-10mm}\mbox{Push\;\; \Big[$\z^{(0)}_{\n_i,k}$, $\v^{i+J-1}_{\n_i,k}$\Big]\; into the queue} \\
	\vline&\hspace{-10mm}[\z_{\n_i,k}^{(1)}, \z_{\n_{i-1},k}^{(2)}, \cdots \z_{\n_{i-J+1},k}^{(J)}]\nn\\
	&= \sum_{\ell \in \cN_k} a_{\ell k } [\z_{\n_i,\ell}^{(0)}, \z_{\n_{i-1},\ell}^{(1)}, \cdots \z_{\n_{i-J+1},\ell}^{(J-1)}] \label{dist.comb}\\
	&\hspace{-10mm}\mbox{Pop\;\;\Big[$\z^{(J)}_{\n_{i-J+1},k}$, $\v^{i}_{\n_{i-J+1},k}$\Big] out of the queue} 
}
{\bf Return} \Big[$\z^{(J)}_{\n_{i-J+1},k}$, $\v^{i}_{\n_{i-J+1},k}$\Big] 
\\
\noindent\HRule
\end{table}
The pipeline strategy has two advantages. First, it is able to calculate $z_{n,k}^{(J)}$ without inner loop, which accelerates the algorithm and maintains the one-time-scale structure. Second, in one iteration, the two-time-scale solution sends a scalar $J$ times while the pipeline solution sends a $J$-length vector just once. Though the communication load is the same, the pipeline solution is usually faster than the two-time-scale approach in practice. That is because sending a scalar with $J$ time needs all agents to synchronize for $J$ times, which can take longer time than the one-time communication.

Observe that when $z_{\n_i, k}^{(0)}$ is popped out of the pipeline after $J$ iterations, the required $\v^i_{\n_i, k}$ should also be from $J$ iterations ago. This means that we need to store the past $\v^i_{\n_i, k}$ values. One solution is to push the auxiliary $\v^i_{\n_i, k}$ along with $z_{\n_i, k}^{(0)}$ into the pipeline but without doing any processing inside the pipeline. Since this value will be used $J$ iterations into the future, this variable should be denoted by $\v^{i+J-1}_{\n_i, k}$. \vspace{2mm}

\arraycolsep=1.4pt
\begin{table}
	\noindent\HRule\\
	\textbf{Algorithm 3 [Pipelined variance-reduced dynamic diffusion ($\rm \bf PVRD^2$) learning]}\\[-2mm]
	\noindent\HRule\\
	{\bf Initialization:} \mbox{Set $w_{0,k}=0$; $\u_{n,k}^0=0;\ \v_{n,k}^0=0$.} \\
	{\bf Repeat for $i=1,2,\ldots$}:\vspace{-2mm}
	\begin{align}
	\n_i \sim \;&{\cal U}[1, N] \;\;\;\;\;(\mbox{uniformly sampling})\\
	&\hspace{-1cm}\Big[\z^{(J)}_{\n_i',k}, \v^{i+1}_{\n_i', k}\Big] \hspace{6mm} ({\rm denote\ } \n_i' \define \n_{i-J+1})\label{pipeline.input}\\
	=\;&{\rm  Pipeline}\!\left(\!\u^i_{\n_i, k} \!+\! Kh_{\n_i,k}\tran \w_{i,k}  \!-\! \v^i_{\n_i,k},\; Kh_{\n_i,k}\tran \w_{i,k}\right)\nn\\
	\hspace{-2mm} \w_{i+1,k} =\;& \w_{i,k} \hspace{-1mm}-\hspace{-1mm} \mu\Bigg\{\!\!\Big[\nabla_{z}Q\hspace{-0.5mm}\big(\hspace{-0.5mm}\z_{\n'_{i}, k}^{(J)};\hspace{-0.5mm} \gamma_{\n'_{i}}\hspace{-0.5mm}\big)\hspace{-1mm} -  \hspace{-1mm}\nabla_{z}Q\hspace{-0.5mm}\big(\hspace{-0.5mm}\u^i_{\n'_i,k};\hspace{-0.5mm}\gamma_{\n'_{i}}\hspace{-0.5mm}\big)\hspace{-0.5mm} \Big]h_{\n'_i,k} \nnb
	&\hspace{1.1cm} + \!\frac{1}{N}\sum_{n=1}^{N}\nabla_{z}Q\Big(\u^i_{n,k}; \gamma_{n}\Big)h_{n,k}+\nabla_w r(\w_{i,k})\!\Bigg\}\label{w-algorithm4}\\
	\u_{n,k}^{i+1} = &
	\begin{cases}
	\z_{\n'_i,k}^{(J)}, & \hspace{1.mm}\mbox{if $n=\n'_i$} \\
	\u_{n,k}^{i}, & \hspace{1.mm}\mbox{otherwise}
	\end{cases}\label{update.u}\\
	\v_{n,k}^{i+1} = &
	\begin{cases}
	\v^{i+1}_{\n_i', k},&\hspace{1.mm} \mbox{if $n=\n'_i$} \\
	\v_{n,k}^{i}, & \hspace{1.mm}\mbox{otherwise}
	\end{cases}\label{update.v}
	\end{align}
	\mbox{\bf End}\\[-2mm]
	\noindent\HRule\vspace{-2mm}
\end{table}

\section{Algorithm Analysis}
\subsection{Delayed gradient calculations}
First, we have to point out that the pipeline solution is not equivalent to the two-time-scale solution. It is observed that the gradient used in the stochastic gradient descent step (\ref{w-algorithm4}) at iteration $i$ is $\grad_z Q(\z^{(J)}_{\n_i',k};\gamma_{\n_i'})$ where $\n_i'=\n_{i-J+1}$, which is $J$-time out-of-date. The cause of the delay is that the variable $\z_{n,k}$ has to conduct $J$ updates in the pipeline. When $\z_{\n_i,k}^{(J)}$ pops up from the pipeline, the iteration index has arrived to $i+J-1$. As a result, the pipeline solution introduces delays in the gradient. 
{\color{black}
	Due to this delay, it does not necessarily follow that deeper pipelines lead to better performance. Actually there is a trade off between depth and performance. 

Fortunately, problems involving delays in gradient calculations are  well-studied in the distributed machine learning and signal processing literature \cite{agarwal2011distributed,peng2016arock,wu2016decentralized}. These works show that convergence can still occur albeit the stability ranges are affected.}
Although, in most literature, the delayed gradient is usually caused by unbalanced computation loads or fragile communication environments instead of pipeline structure, the proof in this paper is inspired from these investigations to some extent. 

\subsection{Convergence Analysis}
To establish the convergence theorem, we require two assumptions
\begin{assumption}[\sc Risk Function]\label{assumption-cost}
	The loss function $Q(z;\gamma_n)\define Q(h_n\tran w; \gamma_n)$ is differentiable, and has an $L$-Lipschitz continuous gradient with respect to $w$ and a $\delta$-Lipschitz continuous gradient with respect to $z$, for every $n=1,\ldots,N$, i.e., for any $w_1, w_2 \in \real^M$:
	\eq{\label{eq-ass-cost-lc-e}
		&\hspace{-3mm}\|\grad_w Q(h_n\tran w_1;\gamma_n) \!-\! \grad_w Q(h_n\tran w_2;\gamma_n)\|\nn\\
		&=\|\grad_z Q(h_n\tran w_1;\gamma_n)h_n \!-\! \grad_z Q(h_n\tran w_2;\gamma_n)h_n\|\nn\\
		&\le L \|w_1-w_2\|\\
		&\hspace{-3mm}\|\grad_z Q(z_1;\gamma_n) - \grad_z Q(z_2;\gamma_n) \| \le \delta \|z_1-z_2\|
	}
	where $L > 0 {\rm\ and\ }\delta>0$. {\color{black}For the regularization term $r(w)$, it is convex and has $\eta$-Lipschitz continuous gradient:
	\eq{
		&\hspace{-3mm}\|\grad_w r(w_1) - \grad_w r(w_2) \| \le \eta \|w_1-w_2\|
	}}\noindent
	We also assume that the risk $R(w)$ is $\nu$-strongly convex, namely,
	\eq{\label{eq-ass-cost-sc-e}
		\hspace{-1mm}\Big(\grad_w R(w_1)- \grad_w R(w_2)\Big)\tran (w_1 - w_2) &\ge \nu \|w_1-w_2\|^2
	}\vspace{-0mm}$\hfill\Box$
\end{assumption}
\begin{assumption}[\sc Topology]\label{assumption-topology} The underlying topology is strongly connected, and the combination/{\color{black}weighted adjacency} matrix 
$A$ is symmetric and doubly stochastic, i.e.,\vspace{-0.5mm}
\eq{
	A = A\tran\;\mbox{and}\;\; A \mathds{1}_K = \mathds{1}_K
}
where $\mathds{1}$ is a vector with all unit entries. {\color{black} We further assume that $a_{kk}>0$ for at least one agent $k$.} 
 $\hfill\Box$
\end{assumption}
Under assumption \ref{assumption-topology}, we can show that matrix $A$ is primitive\cite{horn1990matrix, sayed2014adaptation} and that the second largest magnitude of the eigenvalue of $A$, denoted by $\lambda$, satisfies\cite{pillai2005perron}:
\eq{
	0<\lambda<1
}
\begin{theorem}[\sc\! Convergence of PVRD$^2$]\!\label{theorem.1}\!
	Algorithm \!\! PVRD$^2$\! converges at a linear rate for sufficiently small step-sizes $\mu$, i.e.,
	\eq{
		\Ex \|\w_{i,k} - w^\star\|^2 \leq \rho^i C,\;\;\;\forall k, i>0
	}
	for some constant $C$, where:
	\eq{
		\rho = \max\left(1 - \frac{1-\lambda^J}{2N}, 1-\mu\nu/5\right) \label{converge.rate}
	}
\end{theorem}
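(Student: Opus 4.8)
\noindent The plan is to build one Lyapunov (energy) function that aggregates the three error mechanisms in PVRD$^2$ --- the optimization error of the primal iterates, the SAGA-type gradient-memory error, and the in-pipeline consensus disagreement --- and to show that its conditional expectation contracts by the factor $\rho$ per iteration. I would first introduce the network-stacked iterate $\w_i=\mathrm{col}\{\w_{i,1},\dots,\w_{i,K}\}$ and, for each data index $n$ and each pipeline stage $m=1,\dots,J$, the stacked score estimate $\z_{n}^{(m),i}$, decomposed into its network-average part --- which, by the unbiasedness identity \eqref{dd.unbiased}, equals the true running score $\sum_\ell h_{n,\ell}\tran\w_{\cdot,\ell}$ at the relevant past iteration --- and its disagreement part $\check{\z}_n^{(m),i}$. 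Letting $z_n^\star\defeq\sum_\ell h_{n,\ell}\tran w_\ell^\star$, the three monitored signals are $a_i\defeq\Ex\|\w_i-w^\star\|^2$, $b_i\defeq\frac1N\sum_{n,k}\Ex\|\u_{n,k}^i-z_n^\star\|^2$ (the $\u$'s being the popped pipeline outputs, which equal $z_n^\star$ at the fixed point), and $c_i\defeq\sum_{n}\sum_{m=1}^{J}\Ex\|\check{\z}_n^{(m),i}\|^2$. All three vanish at the fixed point, which is exactly $\w=w^\star$ together with exact scores, and there the update direction in \eqref{w-algorithm4} collapses to $\nabla_{w_k}R(w^\star)=0$ --- this is the crucial consistency check that makes exact linear convergence possible at all.

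\noindent Next I would derive a coupled system of one-step inequalities. For $a_{i+1}$: subtracting $w^\star$ from \eqref{w-algorithm4}, I would write the realized descent direction as $\nabla_w R(\w_i)$ plus three perturbations: (P1) the mean-zero SAGA correction, whose conditional variance is $O(1)\cdot(a_i+b_i+c_i)$; (P2) the inexactness of the popped estimate $\z^{(J)}_{\n_i',k}$ relative to the true score, controlled by $b_i$ and $c_i$; and (P3) the pipeline delay, that is, the gaps between $\w_i$ and the iterates of the previous $J-1$ steps that actually generated the popped estimate and the refreshed memory entries --- each such gap is $O(\mu)$ times bounded gradient norms, so (P3) contributes only absorbable $O(\mu^2)$ terms, which is where the cited delayed-gradient analyses come in. Combined with $\nu$-strong convexity and the $L,\eta$-Lipschitz bounds of Assumption~\ref{assumption-cost}, this gives $a_{i+1}\le(1-c_0\mu+O(\mu^2))\,a_i+O(\mu)(b_i+c_i)$ for some $c_0\propto\nu$. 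For $b_{i+1}$: the processed index $\n_i'=\n_{i-J+1}$ is uniform over $\{1,\dots,N\}$, so exactly one of the $N$ memory blocks is refreshed per iteration and the rest are unchanged; moreover the refreshed block is set to the popped $\z^{(J)}$, whose error relative to $z_n^\star$ is at most its disagreement (which has been contracted by $J$ consensus steps, factor $\lambda^J$, using $\lambda<1$ from Assumption~\ref{assumption-topology}) plus an $O(\mu)$ drift from $a_i$ --- hence $b_{i+1}\le\big(1-(1-\lambda^J)/N\big)b_i+\tfrac{\lambda^J}{N}\,c_i+O(\mu)\,a_i$. For $c_{i+1}$: each iteration the whole buffer shifts and receives one consensus step, so every $\|\check{\z}_n^{(m)}\|^2$ is contracted by $\lambda^2$, while the single fresh push $\check{\z}_n^{(0)}$ has disagreement bounded by (the disagreement carried in $\u$, i.e.\ $O(b_i)$) plus an $O(\mu)$ drift; thus $c_{i+1}\le\lambda^2 c_i+O(1)\,b_i+O(\mu)\,a_i$.

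\noindent I would then assemble these into a vector recursion $\mathrm{col}\{a_{i+1},b_{i+1},c_{i+1}\}\preceq T(\mu)\,\mathrm{col}\{a_i,b_i,c_i\}$, where $T(\mu)$ is entrywise nonnegative with entries affine in $\mu$ up to the harmless $O(\mu^2)$ corrections. At $\mu=0$ the primal row freezes and $T(0)$ has diagonal $\{1,\;1-(1-\lambda^J)/N,\;\lambda^2\}$ with its first row otherwise zero, so its spectral radius is $1$, attained only by the optimization mode. A first-order eigenvalue-perturbation argument then shows that for all sufficiently small $\mu$ the dominant eigenvalue of $T(\mu)$ equals $1-\mu\nu/5+O(\mu^2)$, while the other two eigenvalues remain below $1-(1-\lambda^J)/(2N)$ (the factor $\tfrac12$ being the slack absorbed from the off-diagonal couplings). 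Choosing the positive weights that define the Lyapunov combination so that $T(\mu)$ has $\rho$ as an eigenvalue with a strictly positive eigenvector, and iterating the inequality, yields $\max(a_i,b_i,c_i)\le\rho^i C$, and in particular $\Ex\|\w_{i,k}-w^\star\|^2\le\rho^i C$ with $\rho$ as in \eqref{converge.rate}.

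\noindent The step I expect to be the main obstacle is disentangling the pipeline delay from the SAGA memory in the $b_{i+1}$ (and $c_{i+1}$) recursions: the popped variable $\z^{(J)}_{\n_i',k}$ simultaneously carries $J$ rounds of consensus mixing, primal iterates from $J$ distinct past times, and auxiliary $\u$/$\v$ values recorded $J$ iterations earlier, so producing clean contractions for $b$ and $c$ --- and, crucially, verifying that the $J$-step delay does not destroy the conditionally-mean-zero property of the SAGA correction (P1) needed for the $a$ inequality --- demands careful index bookkeeping over the filtration $\{\mathcal F_i\}$. The precise constants $2N$ and $5$ in \eqref{converge.rate} are artifacts of the particular Young-inequality splittings used to absorb the cross terms and of the slack taken when perturbing the eigenvalue; exhibiting a strictly positive eigenvector of $T(\mu)$ that realizes $\rho$ is routine but must be carried out with care.
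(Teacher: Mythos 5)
Your overall architecture (couple the primal error, the SAGA-memory error, and the consensus disagreement, then extract a contraction at rate $\rho$) points in the right direction, but as written there are two concrete gaps, and they are exactly the places where the paper is forced to do something different.

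First, your three-coordinate state $\{a_i,b_i,c_i\}$ is not closed: it omits the staleness of the $\v$-memory, which the paper must track separately as $f_{i}= \frac{1}{N}\sum_{n}\Ex\|KH_{n}\tran \w_{i-d} - \sv^{i}_{n}\|^2\|h_n\|^2$ with its own recursion \eqref{recursion.f}. The fresh push into the pipeline is $\u^i_{\n_i,k} + K h_{\n_i,k}\tran \w_{i,k} - \v^i_{\n_i,k}$, and the raw scores $K h_{n,k}\tran \w_{i,k}$ remain strongly disagreeing across agents even at the fixed point; they are cancelled only up to the drift $KH_n\tran(\w_{i}-\w_{j-1})$ accumulated since the last iteration $j$ at which index $n$ was sampled, which is on average $N$ iterations in the past. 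Calling this an ``$O(\mu)$ drift from $a_i$'' is not justified: it presumes bounded gradients/iterates (nowhere assumed, and part of what must be proven), and even then the expected age of a memory entry is $O(N)$, so this error has its own slow contraction of order $1-1/(2N)$ driven by sampling, not by $\mu$ and not by $\lambda$. This is precisely where the $2N$ in \eqref{converge.rate} originates; a $T(\mu)$ built only on your three coordinates cannot reproduce it. (Relatedly, the coupling from $a$ into your $b$ and $c$ rows is $O(1/N)$ with an $h^4$ constant, not $O(\mu)$, cf. \eqref{recursion.e}, so the clean ``first row perturbs, others frozen'' picture at $\mu=0$ also needs repair, though that part is fixable.)

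Second, the one-step vector recursion $\mathrm{col}\{a_{i+1},b_{i+1},c_{i+1}\}\preceq T(\mu)\,\mathrm{col}\{a_i,b_i,c_i\}$ does not hold: the true inequalities, e.g.\ \eqref{recursion.a}, \eqref{recursion.e}, \eqref{recursion.f}, involve the delayed terms $a_{i-d}$, $b_{i-d}$ and the window sum $\sum_{j=i-d}^{i-1} b_{j+1}$ with $d=J-1$, so the system has memory of depth $J$. Your plan to absorb all delay effects into ``$O(\mu^2)$ times bounded gradient norms'' is circular for the same reason as above. The paper's proof acknowledges exactly this obstruction (``a simple energy or Lyapunov function cannot establish the linear convergence easily due to the delay'') and takes a different route: it elevates the squared increment $b_i=\Ex\|\w_i-\w_{i-1}\|^2$ to a primary tracked quantity, assumes inductively that $b_j\le \rho^j C_0$ for $j\le i-1$, cascades this hypothesis through the six recursions \eqref{recursion.a}--\eqref{recursion.f} using Lemma~\ref{lemma.exp} (obtaining $f_i,d_i,c_i,a_i,e_i\le \rho^i C_m$ in turn, with a small $\gamma$-weighted combination $s_i=a_i+(\gamma-8\mu^2\delta^2)e_{i-1}$ handling the $a$--$e$ coupling), and then closes the induction by verifying $b_{i+1}\le\rho^{i+1}C_0$ for $\mu$ below an explicit threshold, with all constants shown independent of $\mu$. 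To make your spectral/Lyapunov route rigorous you would have to augment the state with the $\v$-staleness quantity and a $J$-window of past increments (or use a delay-aware small-gain argument), at which point it essentially reduces to the paper's bookkeeping.
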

 \begin{proof} See Appendix \ref{app.a}. \end{proof} 
 
This theorem indicates that the convergence rate of the algorithm depends on the network topology through $\lambda$, the depth of pipeline $J$, and the strong convexity parameter $\nu$. When $J$ is not very large and the first term in \eqref{converge.rate} is larger than the second one, the convergence rate will depend more on the network topology and the depth of the pipeline. However, when $J$ is large enough so that the second term is dominant, the algorithm performance will depend  on the strong convexity parameter $\nu$, as in the single agent case. 
{\color{black} Compared with other distributed learning algorithms\cite{yuan2017exact2,shi2015extra,kar2011convergence}, the convergence behavior of PVRD$^2$ closer to the behavior of single agent SGD than multi-agent SGD; this is also clear from recursion \eqref{3g.hewsd} in the proof. } 

{\color{black}{\bf{Remark}}: The theorem indicates that the algorithm converges to the minimizer of empirical risk \eqref{eq.cost-of-sum} exactly {\color{black} with linear rate}. In many cases we are instead interested in the minimizer of the average risk $R_E(w) = \Ex_\x Q(w;\x)+r(w)$. By an ergodicity argument, the two minimizers will become closer when the number of data samples increases. \hfill\ensuremath{\square}
}

Notice that  algorithm VRD$^2$ is a special case of algorithm  PVRD$^2$ by setting $J=1$. Thus, we establish the following corollary from Theorem \ref{theorem.1}.
\vspace{1mm}

\begin{corollary}[\sc Convergence of VRD$^2$]\!\label{corollary.1}
	Algorithm\!  VRD$^2$ converges at a linear rate for sufficiently small step-sizes $\mu$, i.e.,
	\eq{
		\Ex \|\w_{i,k} - w^\star\|^2 \leq \rho^i C,\;\;\;\forall k, i>0
	}
	for some constant $C$, where:
	\eq{
		\rho = \max\left(1 - \frac{1-\lambda}{2N}, 1-\mu\nu/4\right) \label{converge.rate.2}
	}
\end{corollary}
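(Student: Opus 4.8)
\emph{Proof strategy.} The plan is to build a Lyapunov function from three coupled error quantities and show it contracts geometrically for small $\mu$. The unique minimizer $w^\star$ satisfies, blockwise, $\frac1N\sum_n\grad_z Q(z_n^\star;\gamma_n)h_{n,k}+\grad r(w^\star_k)=0$ with $z_n^\star=\sum_k h_{n,k}\tran w^\star_k$. I would track: the optimization error $O_i\define\sum_k\|\w_{i,k}-w^\star_k\|^2$; the \emph{tracking error} $T_i$, aggregating $\|\z^{(J)}_{n,k}-z_n(\mathbf w_{\tau})\|^2$ over $n,k$, where $z_n(\mathbf w)\define\sum_k h_{n,k}\tran\mathbf w_k$ and $\tau=\tau(n,i)$ is the iteration at which $n$ was last pushed into the pipeline; and the SAGA table error $S_i$, aggregating $\|\u^i_{n,k}-h_{n,k}\tran\w_{\tau,k}\|^2$. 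Since the pipeline delays the gradient by $J$ steps, these appear at times $i,i-1,\dots,i-J$, so the final inequality is a delayed linear recursion, to be collapsed by state augmentation or by a geometrically weighted sum $\sum_{m\le i}\beta^{i-m}(\cdot)$.

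\textbf{Descent and table recursions.} Subtracting $w^\star_k$ from \eqref{w-algorithm4} and using the optimality condition rewrites the update as a perturbed variance-reduced step; invoking $\nu$-strong convexity of $R$, the $L$-Lipschitz bound of Assumption \ref{assumption-cost}, and the $\delta$-Lipschitz bound $\|\grad_z Q(\z^{(J)}_{\n_i',k};\gamma_{\n_i'})-\grad_z Q(z_{\n_i'}(\mathbf w);\gamma_{\n_i'})\|\le\delta\|\z^{(J)}_{\n_i',k}-z_{\n_i'}(\mathbf w)\|$ to control the replacement of the exact score by its pipeline estimate, one gets $\Ex O_{i+1}\le(1-c_1\mu\nu)\Ex O_i+c_2\mu\,\Ex T_i+c_3\mu\,\Ex S_i+O(\mu^2)(\cdots)$ — a single-agent-like variance-reduced recursion plus a tracking perturbation. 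The table error obeys the standard SAGA ``one coordinate refreshed per iteration'' bound $\Ex S_{i+1}\le(1-\tfrac1N)\Ex S_i+\tfrac{c_4}{N}(\Ex O_i+\Ex T_i)$.

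\textbf{Tracking-error recursion.} This is the crux. Two structural facts are used. First, since $A$ is doubly stochastic, each combination inside the pipeline preserves the block-average exactly, so $\frac1K\sum_k\z^{(m)}_{n,k}$ is independent of $m$, and by the unbiasedness identity \eqref{dd.unbiased} this average equals $z_n(\mathbf w_{\tau})$. Second, Assumption \ref{assumption-topology} gives $0<\lambda<1$, so $J$ successive combinations contract the deviation of each agent's value from that average by $\lambda^J$, yielding $\|\z^{(J)}_{\n_i',k}-z_{\n_i'}(\mathbf w_\tau)\|^2\le\lambda^{2J}\|\z^{(0)}_{\n_i',k}-z_{\n_i'}(\mathbf w_\tau)\|^2$, and the freshly pushed residual is itself small because $\z^{(0)}$ is assembled from $\u,\v$ and the current inner product. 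As exactly one datum is refreshed per iteration, $\Ex T_{i+1}\le(1-\tfrac1N)\Ex T_i+\tfrac{\lambda^{2J}}{N}(\text{small residual})+\tfrac{c_5}{N}\sum_{m=i-J}^{i}\Ex\|\mathbf w_{m+1}-\mathbf w_m\|^2$, the last term capturing drift of the target over the $J$ pipeline steps and being $O(\mu^2)$ termwise.

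\textbf{Assembly and main obstacle.} Stacking $V_i=(\Ex O_i,\Ex T_i,\Ex S_i)$, the three recursions give $V_{i+1}\le\sum_{m=0}^J M_m V_{i-m}$ componentwise, with nonnegative $M_m$ whose diagonal is $\mathrm{diag}(1-c_1\mu\nu,\,1-\tfrac1N,\,1-\tfrac1N)$ up to $O(\mu)$/$O(\mu^2)$ corrections and whose off-diagonal/delay entries are $O(\mu)$ or $O(\mu^2)$. A geometrically weighted Lyapunov function collapses this to $\tilde V_{i+1}\le\rho\tilde V_i$, and a Gershgorin/Perron--Frobenius estimate on the $3\times3$ comparison matrix shows its dominant eigenvalue is, for small enough $\mu$, at most $\max\{1-\tfrac{1-\lambda^J}{2N},\,1-\mu\nu/5\}$ — the factor $2$ and the constant $5$ providing the slack that absorbs the $O(\mu)$ cross terms and the gap between $\lambda^{2J}$ and $1-\lambda^J$. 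Iterating gives $\Ex O_i\le\rho^iC$, which is the claim. I expect the main difficulty to lie in the tracking step together with this assembly: one must follow the \emph{moving} target $z_n(\mathbf w)$ as $\mathbf w$ evolves, absorb the $J$-step gradient delay without degrading the contraction, and keep every cross term between the three channels proportional to $\mu$ (or $\mu^2$) so the block comparison matrix stays a small perturbation of its diagonal and the advertised constants emerge. The descent and SAGA steps are, by contrast, fairly routine.
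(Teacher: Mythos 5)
You have in effect re-proved Theorem \ref{theorem.1} rather than the corollary: the paper's own proof here is a one-liner, since VRD$^2$ is PVRD$^2$ with $J=1$ and the statement follows by specializing Theorem \ref{theorem.1} (the only nontrivial point being that with $J=1$ the delay $d=J-1$ vanishes, which is what permits tightening the second branch of $\rho$ from $1-\mu\nu/5$ to $1-\mu\nu/4$). Your three channels --- optimization error, tracking error, SAGA table error --- aggregate the paper's six scalar recursions \eqref{recursion.a}--\eqref{recursion.f}, and your two structural facts (preservation of the block average, i.e.\ \eqref{dd.unbiased} and \eqref{equiv.z.bar}, and $\lambda^J$-contraction of the deviation from that average, i.e.\ \eqref{recursion3}) are exactly the paper's. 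The genuine difference is the assembly: the paper closes the argument by induction on $\Ex\|\w_i-\w_{i-1}\|^2\le\rho^i C_0$, pushing linear decay through the scalar recursions one at a time via Lemma \ref{lemma.exp}, whereas you propose a delayed comparison-matrix/small-gain argument collapsed by geometric weighting and a Perron--Frobenius/Gershgorin bound. That is a viable and arguably more systematic alternative, but it is far heavier than what this corollary needs given that Theorem \ref{theorem.1} is already available.

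Two concrete issues remain. First, as written your sketch delivers the rate $\max\{1-\frac{1-\lambda^J}{2N},\,1-\mu\nu/5\}$, which for $J=1$ is weaker than the claimed $\max\{1-\frac{1-\lambda}{2N},\,1-\mu\nu/4\}$; the improvement from $5$ to $4$ is not cosmetic --- in the paper's proof the gap $\mu\nu/4-\mu\nu/5$ is precisely the slack that keeps $\frac{\mu}{\rho-1+\mu\nu/4}$ bounded and absorbs the delayed terms $\mu d L'\sum_j b_{j+1}$ --- so to obtain the stated $\rho$ you must redo the slack allocation exploiting $d=0$ (e.g.\ keep the descent contraction at $1-\mu\nu/3$ and place $1-\mu\nu/4$ in $\rho$), rather than inherit the general-$J$ constant. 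Second, the agentwise contraction $\|\z^{(J)}_{n,k}-z_n(\w_\tau)\|^2\le\lambda^{2J}\|\z^{(0)}_{n,k}-z_n(\w_\tau)\|^2$ is not correct per agent: it is the stacked deviation from the network average that contracts under $A^J-\frac{1}{K}\one\one\tran$, and your merged tracking channel must also carry the staleness of the stored $\v$-table (the paper's $f_i$), which contracts only at rate $1-\frac{1}{2N}$ through the sampling, not through $\lambda$. These are repairable bookkeeping points, but they are exactly where the first branch $1-\frac{1-\lambda^J}{2N}$ of the advertised rate comes from, so they cannot be waved through in the eigenvalue estimate.
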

\qd

\section{Simulations}\vspace{-0mm}
We illustrate the performance of the PVRD$^2$ algorithm on the MNIST dataset, which consists of 50000 $28\times 28$ handwritten digits\footnote{\url{http://yann.lecun.com/exdb/mnist/}}. In the simulation, we consider the classification task of predicting digit $0$ or digit $1$. We separate the features over 8 networked agents. The loss function we use is logistic regression:
\eq{
	R(w) \!=\! \frac{1}{N} \sum_{n=1}^N \ln\left(\!1+\exp\Big(\!\!-\!\gamma_n \!\sum_{k=1}^Kh_{n,k}\tran w_k\Big) \!\right)
	+ \rho\sum_{k=1}^K\|w_k\|^2
}
In the simulation, we set $\rho = 1\times 10^{-4}$.
From two subplots in Fig.~\ref{fig.visual.w},\! we see that each agent is in charge of part of $w$, and each converges to its corresponding part of~$w^\star$. Next, we compare our algorithm to the method proposed in \cite{sundhar2012new} with some modification, which can be viewed as the deterministic full-gradient version of our algorithm without pipeline. {\color{black}To make a fair comparison, we plot the convergence curve based on the count of gradients calculated and the combination step in Fig.~\ref{fig.converge}. Notice that when the pipeline step $J$ is larger, we need to do more operation on the combination step (\ref{dist.comb}). Therefore, we use a different mini-batch $B$ for different $J$ but we keep the sum, $J+B$  equal 30.}
The curve shows that the larger $J$ we set, the faster the algorithm converges until it is large enough to trigger the second term in the convergence rate.\vspace{-0mm}
\begin{figure}[h]
	\centering
	\includegraphics[scale=0.24]{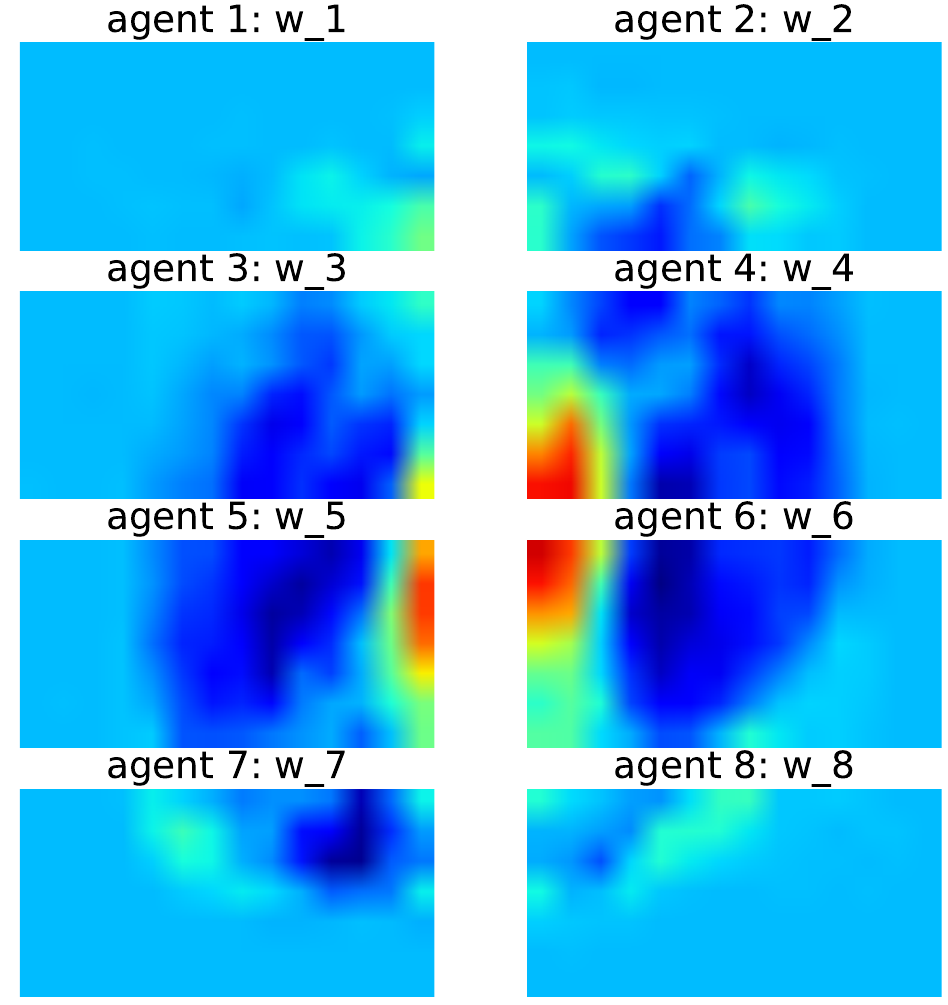}\hspace{0mm}
	\includegraphics[scale=0.25]{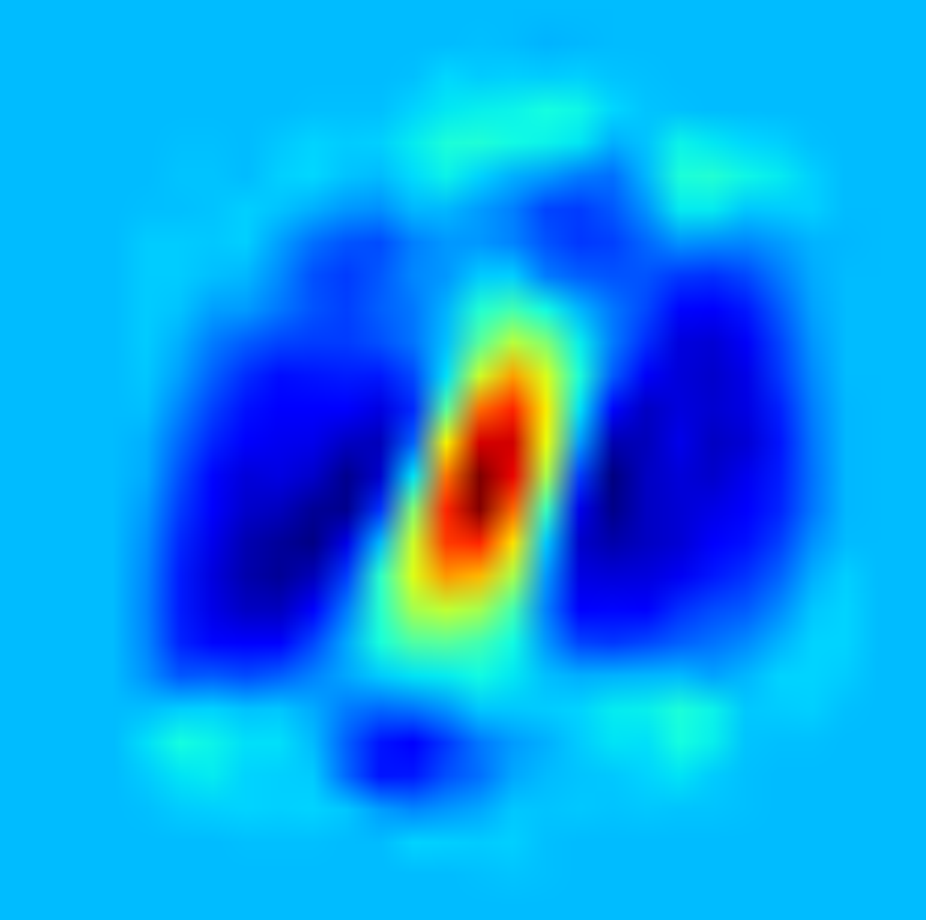}\vspace{0.6mm}
	\caption{Left: Visualization of $\w_{i,k}$;\; Right: Visualization of $w^\star$.\;  }\label{fig.visual.w}\vspace{-2mm}
\end{figure}
\begin{figure}[htp]
	\centering
	\includegraphics[scale=0.55]{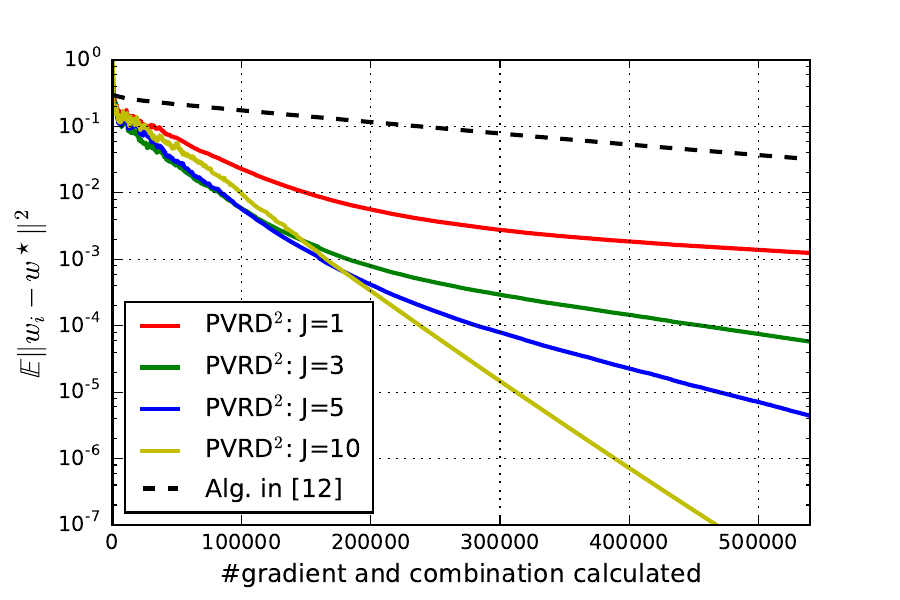}\vspace{-0mm}
	\caption{\color{black} Illustration of the convergence behavior with different pipeline depth $J$ over MNIST dataset}\label{fig.converge}\vspace{-3mm}
\end{figure}
\begin{figure}[htp]
	\centering
	\includegraphics[scale=0.5]{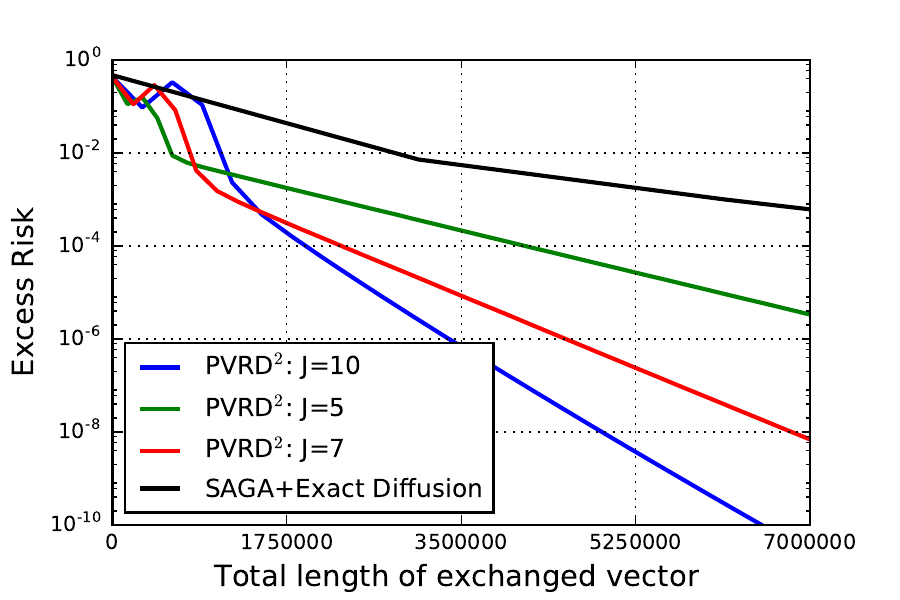}\vspace{-0mm}
	\caption{Comparison of communication vector length between PVRD$^2$ and SAGA+Exact Diffusion over CIFAR-10 Dataset.}\label{fig.communication}\vspace{-5mm}
\end{figure}

Next, we compare the communication cost  on the CIFAR-10 dataset, which consists of 50000 32x32 color images in 10 classes, with 5000 images per class\footnote{\url{https://www.cs.toronto.edu/~kriz/cifar.html}}. For this multi class problem, we use the softmax as the loss function:
\eq{
	R(W) =& -\frac{1}{N}\sum_{n=1}^N\ln\left( \frac{\exp(\sum_{k=1}^KW_{k}[\,:\,,\gamma_n]\tran h_{n,k}) }{\sum_{c=1}^C\exp(\sum_{k=1}^K W_{k}[\,:\,,c]\tran h_{n,k}) }\right) \nn\\
	&\;\;+ \rho \sum_{k=1}^K\|W_{k}\|_F^2
}
where $\|\cdot\|_F$ represents the Frobenius norm, $W_k\in\real^{M\times C}$ is a matrix that has dimensions $M\times C$, where $C=10$ in CIFAR-10 problem, and  $W[\,:\,,c] \mbox{ is the $C$-th column of matrix }W$. Notice that this model is a little bit different from \eqref{cost.full}. However, it is still easy to adapt our algorithm for the softmax cost. Notice that  we now need to find the summation of $C$-classes, i.e.
\eq{
	z_{n,k} \define \ba{c}
	z_{n,k}(1)\\[1mm]
	z_{n,k}(2)\\
	\vdots\\[1mm]
	z_{n,k}(C)
	\ea
	\define \ba{c}
	 \sum_{k=1}^KW_{k}[\,:\,,1]\tran h_{n,k}\\[1.5mm]
	 \sum_{k=1}^KW_{k}[\,:\,,2]\tran h_{n,k}\\
	\vdots\\[1.5mm]
	 \sum_{k=1}^KW_{k}[\,:\,,c]\tran h_{n,k}
	\ea
}
As long as the agent has the information of $z_{n,k}$, it can compute the gradient locally.  For simplicity, we let 
\eq{
	{\rm prob}(z_{n,k}, \gamma_n)\define&\frac{\exp\Big(z_{n,k}(\gamma_n)\Big) }{\sum_{c=1}^C\exp\Big(z_{n,k}(c)\Big) } \nn\\
	=&\frac{\exp\Big(\sum_{k=1}^KW_{k}[\,:\,,\gamma_n]\tran h_{n,k}\Big) }{\sum_{c=1}^C\exp\Big(\sum_{k=1}^K W_{k}[\,:\,,c]\tran h_{n,k}\Big) } 
}
After some algebraic manipulations, we know that
\eq{
	\nabla_{W_k[\,:\,,\gamma_n]} R_n(W) \!=& \Big({\rm prob}(z_{n,k}, \gamma_n)-1\Big)h_{n,k} + \rho W_k[\,:\,,\gamma_n]
}
and for the other columns where  $c\neq\gamma_n$, we have
\eq{
	\nabla_{W_k[\,:\,,c]} R_n(W) =&\; {\rm prob}(z_{n,k}, c)h_{n,k} + \rho W_k[\,:\,,c]
}
Thus, the only modification that is necessary to our algorithm is changing from sending a scalar $z_{n,k}$ into sending a vector $z_{n,k}$. In the simulation, we compare this implementation it with SAGA+Exact diffusion as proposed in \cite{yuan2017efficient}. In the simulation, as plotted in Fig.~\ref{fig.communication}, we set $K=10$, $J=10$, $\rho=1e-4$ and mini-batch is $B=10$. In Fig.~\ref{fig.communication}, the curve is plotted based on the total number of communicated length versus the excess risk, i.e., $R(w_i) - R(w^\star)$. More specifically, for PVRD${}^2$ algorithm, at every iteration, each edge of the network will communicate a length of $J\times C\times B = 1000$. Meanwhile, for Exact Diffusion, it needs $32\times32\times3\times10 = 30720$. Hence, from  Fig.~\ref{fig.communication}, it is not surprising to find that  PVRD${}^2$ is more communication-efficient.

{\color{black}
Lastly, we provide a simulation to show the influence of network structure on  algorithm performance in Fig.~\ref{fig.converge_network}. The approach we utilized to generate the network is the random geometric graph model, which places $N$ nodes uniformly at random in the unit square and two nodes are joined by an edge if the Euclidean distance between the nodes is less than some radius threshold\cite{penrose2003random}. The simulation problem is the same as the previous MNIST problem except we distribute the feature over 28 agents. We generate four network topologies, whose Euclidean distance threshold $\Delta$ are $0.3$, $0.4$, $0.6$, and $\sqrt{2}$---full connected networks, respectively. For all network topologies, we fix the pipeline depth at 20 and  mini-batch at 10. Figure~\ref{fig.converge_network} confirms the conclusion from Theorem \ref{theorem.1} that the denser the topology is, the faster it converges. 

\begin{figure}[htp]
	\centering
	\includegraphics[scale=0.55]{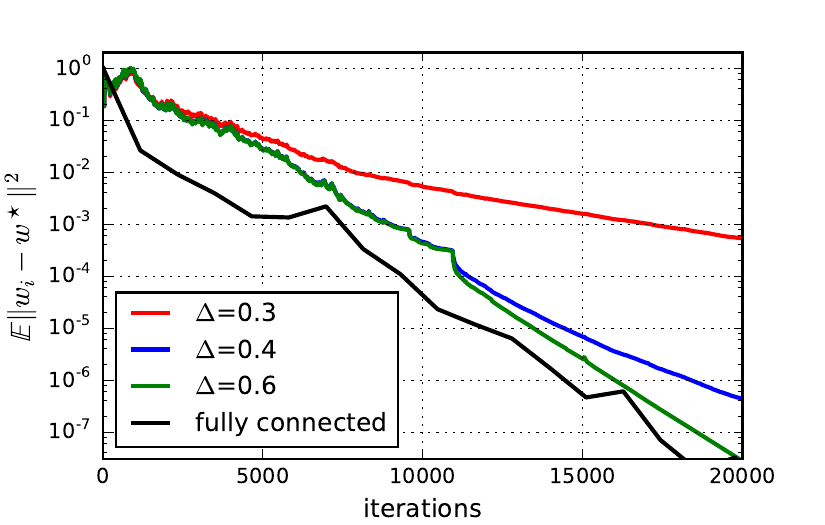}\vspace{-0mm}
	\caption{{\color{black} Comparison of PVRD$^2$ with different network topologies over MNIST Dataset.}}\label{fig.converge_network}\vspace{-5mm}
\end{figure}

}

\appendices
\section{Proof of Theorem \ref{theorem.1}} \label{app.a}
{\color{black}
\subsection{Supporting lemma and proof sketch}
Before we start the proof of the main theorem, we first give two useful lemmas here. 
\begin{lemma} \label{lemma.1}
	For a $\nu$-strongly convex and $L$-Lipschitz gradient continuous function $f(\cdot)$, the following inequality holds:
	\eq{
		(\nabla f(z)-\nabla f(y))\tran(x-z) \leq\;&  \frac{L}{2}\|x-y\|^2- \frac{\nu}{2}\|z-y\|^2\nn\\
		&\;\;{}- \frac{\nu}{2}\|x-z\|^2 \label{cross.grad}
	}
\end{lemma}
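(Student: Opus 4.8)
The plan is to obtain the bound by adding together three one-line consequences of the hypotheses, chosen so that all appearances of $f$ cancel. First I would split the left-hand side by inserting the point $y$:
\[
(\nabla f(z)-\nabla f(y))\tran(x-z) = \nabla f(z)\tran(x-z) \;-\; \nabla f(y)\tran(x-y) \;-\; \nabla f(y)\tran(y-z).
\]
Each of the three pieces is then controlled by a single inequality. For the first piece, $\nu$-strong convexity with base point $z$ and evaluation point $x$ gives $f(x)\ge f(z)+\nabla f(z)\tran(x-z)+\frac{\nu}{2}\|x-z\|^2$, i.e. $\nabla f(z)\tran(x-z)\le f(x)-f(z)-\frac{\nu}{2}\|x-z\|^2$. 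For the second piece, the descent lemma (the quadratic upper bound implied by the $L$-Lipschitz gradient) with base point $y$ and evaluation point $x$ gives $f(x)\le f(y)+\nabla f(y)\tran(x-y)+\frac{L}{2}\|x-y\|^2$, i.e. $-\nabla f(y)\tran(x-y)\le f(y)-f(x)+\frac{L}{2}\|x-y\|^2$. For the third piece, $\nu$-strong convexity with base point $y$ and evaluation point $z$ gives $f(z)\ge f(y)+\nabla f(y)\tran(z-y)+\frac{\nu}{2}\|z-y\|^2$, i.e. $-\nabla f(y)\tran(y-z)=\nabla f(y)\tran(z-y)\le f(z)-f(y)-\frac{\nu}{2}\|z-y\|^2$.

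Summing the three displayed bounds, the function values $f(x)$, $f(y)$, and $f(z)$ each occur once with a $+$ sign and once with a $-$ sign and therefore cancel, leaving precisely
\[
(\nabla f(z)-\nabla f(y))\tran(x-z)\;\le\;\frac{L}{2}\|x-y\|^2-\frac{\nu}{2}\|z-y\|^2-\frac{\nu}{2}\|x-z\|^2,
\]
which is the claim.

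There is essentially no obstacle in this argument; the only thing that needs a moment's thought is the bookkeeping — deciding which of the two quadratic bounds to apply at each of the pairs $(z,x)$, $(y,x)$, $(y,z)$ so that the scalar terms telescope. Using the smoothness (upper) bound exactly once, at the pair $(y,x)$, and the strong-convexity (lower) bound at the other two pairs is what forces the cancellation. If a fully self-contained write-up is desired, the descent lemma used above can itself be recovered in one line from the identity $f(x)-f(y)=\int_0^1 \nabla f(y+t(x-y))\tran(x-y)\,dt$ together with Cauchy–Schwarz and the estimate $\|\nabla f(y+t(x-y))-\nabla f(y)\|\le Lt\|x-y\|$.
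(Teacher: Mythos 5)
Your proof is correct and is essentially the paper's own argument: both combine the smoothness (descent-lemma) upper bound at the pair $(y,x)$ with the strong-convexity lower bounds at the pairs $(y,z)$ and $(z,x)$, so that the function values $f(x),f(y),f(z)$ cancel and only the quadratic terms remain. The only difference is presentational (you decompose the left-hand side first, while the paper bounds $f(x)-f(z)$ from above and below and then rearranges), which does not change the substance.
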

{\bf Proof}:
From  the convexity and Lipschitz conditions, we have:
\eq{
	f(x) \leq& f(y) + \nabla f(y)\tran(x-y) + \frac{L}{2}\|x-y\|^2\\
	f(z) \geq& f(y) + \nabla f(y)\tran(z-y) + \frac{\nu}{2}\|z-y\|^2\\
	f(x) \geq& f(z) + \nabla f(z)\tran(x-z) + \frac{\nu}{2}\|x-z\|^2
}
Combining these inequalities, we get:
\eq{
	f(x) - f(z)\leq& f(y) + \nabla f(y)\tran(x-y) + \frac{L}{2}\|x-y\|^2\nn\\
	&\hspace{5mm} - f(y) - \nabla f(y)\tran(z-y) - \frac{\nu}{2}\|z-y\|^2\nn\\
	\leq& \nabla f(y)\tran(x-z) + \frac{L}{2}\|x-y\|^2- \frac{\nu}{2}\|z-y\|^2\\
	f(x) - f(z)\geq &\nabla f(z)\tran(x-z) + \frac{\nu}{2}\|x-z\|^2  
}
Combining the above  two inequalities and rearranging terms, we establish \eqref{cross.grad}.\hfill\qd

\begin{lemma} \label{lemma.exp}
	Consider the inequality recursion for non-negative numbers $\alpha_i$:
	\eq{
		\alpha_{i+1} \leq \lambda \alpha_{i} + \rho^i C \label{23gb}
	}
	where $1>\rho>\lambda > 0$. The sequence $\alpha_i$ converges to zero according to
	\eq{
		\alpha_{i+1} \leq \rho^{i+1}\left( \alpha_0 + \frac{1}{\rho-\lambda} C\right)
	}
\end{lemma}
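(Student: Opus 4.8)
The plan is to establish the bound by a direct induction on $i$, feeding the claimed estimate back into the recursion as the induction hypothesis; no preliminary unrolling is needed. Set $\beta \define \alpha_0 + \frac{1}{\rho-\lambda}\,C$, so the target is $\alpha_i \le \rho^i\beta$ for every $i\ge 0$, which for $i$ replaced by $i+1$ is exactly the stated inequality. The base case $i=0$ is immediate: $\alpha_0 \le \alpha_0 + \frac{1}{\rho-\lambda}\,C = \beta$, since $C\ge 0$ and $\rho>\lambda$ make the added term non-negative.

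For the inductive step, suppose $\alpha_i \le \rho^i\beta$. Substituting into \eqref{23gb} gives
\eq{
\alpha_{i+1} \;\le\; \lambda\,\rho^i\beta + \rho^i C \;=\; \rho^i\big(\lambda\beta + C\big),
}
so it remains to verify $\lambda\beta + C \le \rho\beta$, i.e. $(\rho-\lambda)\beta \ge C$. But by definition $(\rho-\lambda)\beta = (\rho-\lambda)\alpha_0 + C \ge C$, because $\alpha_0\ge 0$ and $\rho-\lambda>0$. Hence $\alpha_{i+1}\le\rho^{i+1}\beta$, which closes the induction and is precisely the assertion.

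An equivalent, slightly more computational route is to unroll \eqref{23gb} to $\alpha_{i+1} \le \lambda^{i+1}\alpha_0 + C\sum_{t=0}^{i}\lambda^{t}\rho^{\,i-t}$, then use $\lambda^{i+1}\alpha_0 \le \rho^{i+1}\alpha_0$ together with the geometric bound $\sum_{t=0}^{i}\lambda^{t}\rho^{\,i-t} = \rho^{i}\sum_{t=0}^{i}(\lambda/\rho)^{t} \le \rho^{i}\cdot\frac{1}{1-\lambda/\rho} = \frac{\rho^{i+1}}{\rho-\lambda}$; summing the two contributions recovers the same constant $\alpha_0 + \frac{1}{\rho-\lambda}C$. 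There is no genuine obstacle in this lemma; the only points needing care are the index bookkeeping in the unrolling and the repeated use of the hypothesis $\rho>\lambda$ (to keep $\lambda/\rho<1$ so the geometric series converges and to keep $\rho-\lambda>0$ in the denominator), while non-negativity of $\alpha_0$ and $C$ is what makes both the base case and the step inequality go through.
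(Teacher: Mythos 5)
Your proof is correct. Your primary argument (induction on $i$ with the claimed bound $\alpha_i \le \rho^i\bigl(\alpha_0 + \tfrac{C}{\rho-\lambda}\bigr)$ as the hypothesis, reduced to the one-line check $(\rho-\lambda)\beta \ge C$) is a different route from the paper, which instead unrolls the recursion to $\alpha_{i+1}\le \lambda^{i+1}\alpha_0 + C\sum_{j=0}^{i}\lambda^j\rho^{i-j}$ and bounds the geometric sum by $\tfrac{\rho^{i+1}}{\rho-\lambda}$ --- this is precisely the ``alternative computational route'' you sketch in your second paragraph, so you have in effect given both proofs. The induction is a bit cleaner: it sidesteps the index bookkeeping in the unrolled sum (where the paper in fact has a small typo, $\rho^{i-d}$ in place of $\rho^{i-j}$), at the cost of having to guess the constant in advance; the unrolling, conversely, shows where the constant $\alpha_0 + \tfrac{C}{\rho-\lambda}$ comes from. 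Both arguments use the same hypotheses in the same places ($\rho>\lambda$ for the geometric ratio and the denominator, nonnegativity of $\alpha_0$ and $C$ for the base case and the step), so either version serves the lemma equally well.
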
 
{\bf Proof}: Iterating, we get
\eq{
	\alpha_{i+1} \leq& \lambda^2 \alpha_{i-1} + \lambda \rho^{i-1}C + \rho^i C\nn\\
	\leq&\lambda^3 \alpha_{i-2} +\lambda^2 \rho^{i-2}C +  \lambda \rho^{i-1}C + \rho^i C\nn\\
	&\vdots\nn\\[-2mm]
	\leq& \lambda^{i+1} a_0 + \sum_{j=0}^{i}\lambda^j \rho^{i-d} C\nn\\
	\leq& \lambda^{i+1} \alpha_0 + \frac{\rho^i}{1-\lambda/\rho} C\nn\\
	\leq& \rho^{i+1} \left( \alpha_0 + \frac{1}{\rho-\lambda} C\right)
}
\hfill\qd

The procedure of proving the linear convergence in Theorem \ref{theorem.1} consists of two steps. The first step is to construct an error recursion for $\Ex\|\w_{i}-w^\star\|^2$. However, because of many auxiliary variables, eventually, we establish six recursions--- \eqref{recursion1}, \eqref{recursion2}, \eqref{recursion3}, \eqref{recursion4}, \eqref{recursion5}, and \eqref{recursion6}. Note that a simple energy or Lyapunov function cannot establish the linear convergence easily due to the delay introduced by the pipeline strategy. Instead, we employ a mathematical induction method. After assuming the quantity $\Ex\|\w_{i} - \w_{i-1}\|^2$ converges linearly, the six error recursions will have the same format as \eqref{23gb} in Lemma \ref{lemma.exp}. So that we conclude the proof by showing $\Ex\|\w_{i} - \w_{i-1}\|^2$ converges linearly in view of Lemma \ref{lemma.exp}.
}
\subsection{Error Recursion}
In order to shorten the notation, we let $d = J - 1$, such that $\n' = \n_{i-d}$.
First, we rewrite the main recursion in vector notation:
\eq{
	\w_{i+1} 	=\;& \w_{i} - \mu \Big(\big[\nabla_z Q({\sz}^{(J)}_{\n', i-d}; \gamma_{\n'})- \nabla_z Q(\su^i_{\n'}; \gamma_{\n'})\big]h_{\n'} \nn\\
	&\hspace{15mm}+\frac{1}{N}\sum_{n=1}^N \nabla_z Q(\su^i_{n}; \gamma_{n})h_n + \nabla_w r(\w_i)\Big) \label{3g.hewsd}
}
where:
\eq{
	{\sz}^{(J)}_{\n',i-d}\!\!\define\!& {\rm col}\{\z^{(J)}_{\n',1}, \z^{(J)}_{\n',2}, \cdots, \z^{(J)}_{\n',K} \} {\color{black} \in \real^{K\times 1}}\\
	{\su}^i_{\n'}\!\!\define\!& {\rm col}\{\u^{i}_{\n',1}, \u^{i}_{\n',2}, \cdots, \u^{i}_{\n',K} \} {\color{black} \in \real^{K \times 1}}\\
	\hspace{-3mm}\nabla_z Q({\sz}^{(J)}_{\n',i-d}; \gamma_{\n'})\!\!\define\!&{\rm blockdiag}\{\nabla_z Q(\z^{(J)}_{\n', 1}; \gamma_{\n'})I_{M_1},\cdots,\nn\\
	&\hspace{16mm} \nabla_z Q(\z^{(J)}_{\n', K}; \gamma_{\n'})I_{M_K}\} {\color{black} \in \real^{M\times M}} \\
	\hspace{-3mm}\nabla_z Q(\su^{i}_{\n'}; \gamma_{\n'})\!\!\define\!&{\rm blockdiag}\{\nabla_z Q(\u^{i}_{\n', 1}; \gamma_{\n'})I_{M_1},\cdots,\nn\\
	&\hspace{16mm} \nabla_z Q(\u^{i}_{\n',K}; \gamma_{\n'})I_{M_K}\} {\color{black} \in \real^{M\times M}}
}
In our notation convention, all calligraphic symbols represent stacked vectors or matrices over agents.
Notice the subscript $i-d$ in $\sz$ is used to represents the $\z$ is computed from $\w_{i-d}$ instead of data or agent index. 
Introducing the error quantity $\tw_i \define w^\star - \w_i$, we obtain the error recursion:
\eq{
	\tw_{i+1} 
	=\;& \tw_{i} + \mu \Big(\big[\nabla_z Q({\sz}_{\n', i-d}^{(J)}; \gamma_{\n'})- \nabla_z Q(\su^i_{\n'}; \gamma_{\n'})\big]h_{\n'}\nn \\
	&\hspace{15mm} {}+ \frac{1}{N}\sum_{n=1}^N \nabla_z Q(\su^{i}_{n}; \gamma_{n})h_{n} + \nabla_w r(\w_i^k)\Big)
}
We introduce  the filtration $\cF_{i}$, which consists of all previous information $\w_{j}, \;j\leq i$ and indices $\n_{j},\;j\leq i-J$. Then, the modified gradient step satisfies the unbiasedness property:
\eq{
	&\Ex_{\n'}\Bigg[\Big(\big[\nabla_z Q({\sz}_{\n',i-d}^{(J)}; \gamma_{\n'})- \nabla_z Q(\su^i_{\n'}; \gamma_{\n'})\big]h_{\n'} \nn\\
	&\hspace{1cm}+\frac{1}{N}\sum_{n=1}^N \nabla_z Q(\su^{i}_{n}; \gamma_{n})h_{n} + \nabla_w r(\w_i)\Big)\Big|\cF_{i}\Bigg]\nn\\
	&= \frac{1}{N}\sum_{n=1}^N \nabla_z Q({\sz}^{(J)}_{n,i-d}; \gamma_{n}) h_n+ \nabla_w r(\w_i)
}
Thus, we have the following error recursion:
\eq{
	&\hspace{-4mm}\Ex_{\n'} [\|\tw_{i+1}\|^2 |\cF_{i}]\nn\\
	=& \|\tw_{i}\|^2 + \frac{2\mu}{N}\sum_{n=1}^N \tw_{i}\tran [\nabla_z Q({\sz}^{(J)}_{n,i-d};\gamma_n) h_n + \nabla_w r(\w_i)] \nn\\
	&\hspace{3mm}{}+ \mu^2\Ex\Bigg[\Big\|\big[\nabla_z Q(\sz^{(J)}_{\n', i-d}; \gamma_{\n'})- \nabla_z Q(\su^{i}_{\n'}; \gamma_{\n'})\big]h_{\n'} \nn\\
	&\hspace{17mm}+ \frac{1}{N}\sum_{n=1}^N \nabla_z Q(\su^{i}_{n}; \gamma_{n})h_{n}  + \nabla_w r(\w_i)\big\|^2 \Big|\filt_i\Bigg]\nn\\
	\leq&\;\|\tw_{i}\|^2 \nn\\
	&\hspace{1mm} {} {\color{black}+ \frac{2\mu}{N}\sum_{n=1}^N \tw_{i}\tran [\nabla_z Q({\sz}^{(J)}_{n,i-d};\gamma_n)h_n  + \nabla_w r(\w_i) ]}\nn\\
	&\hspace{1mm}{} + 4\mu^2\Ex\Big\|\big[\nabla_z Q({\sz}^{(J)}_{\n',i-d}; \gamma_{\n'})\!-\! \nabla_z Q(h_{\n'}\tran w^\star; \gamma_{\n'})I_M\big]h_{\n'}\Big\|^2\nn\\
	&\hspace{1mm}{} + 4\mu^2\Ex\Big\|\big[\nabla_z Q({\su}^{i}_{\n'}; \gamma_{\n'})\!-\! \nabla_z Q(h_{\n'}\tran w^\star; \gamma_{\n'})I_M\big]h_{\n'}\Big\|^2\nn\\
	&\hspace{1mm}{} + 4\mu^2\frac{1}{N}\sum_{n=1}^N \Big\|[\nabla_z Q(\su^{i}_{n}; \gamma_{n})\!-\!\nabla_z Q(h_n\tran w^\star; \gamma_{n})I_M]h_{n}  \Big\|^2\nn\\
	&\hspace{1mm}{} + 4\mu^2\Big\|\nabla_w r(\w_i)-\nabla_w r(w^\star)  \Big\|^2
	\label{eq.recursion1}
}
where we appealed to Jensen's inequality and $I_M$ is the $M\times M$ identity matrix. 
Next, we focus on the cross term:
{\color{black}
\eq{
	&\hspace{-4mm}\frac{1}{N}\sum_{n=1}^N \tw_{i}\tran [\nabla_z Q({\sz}^{(J)}_{n,i-d};\gamma_n)h_n + \nabla_w r(\w_i)] \nn\\
	=&\frac{1}{N}\sum_{n=1}^N \tw_{i}\tran [\nabla_z Q({\sz}^{(J)}_{n,i-d};\gamma_n)h_n + \nabla_w r(\w_i) - \nabla_w R(w^\star)] \nn\\
	\stackrel{(a)}{=}&\tw_i\tran \left[\frac{1}{N}\sum_{n=1}^N\nabla_z Q(\bar{\sz}_{n,i-d};\gamma_n) h_n + \nabla_w r(\w_i) \!-\!  \nabla_w R(w^\star)\right]
	\nn\\
	&+\frac{1}{N}\sum_{n=1}^N\tw_{i}\tran[\nabla_z Q({\sz}^{(J)}_{n,i-d};\gamma_n) h_n \!-\!  \nabla_z Q(\bar{\sz}_{n,i-d};\gamma_n) h_n]\nn\\
	\stackrel{(b)}{=} & [\nabla_w R(\w_{i-d}) - \nabla_w R(w^\star)]\tran(w^\star -  \w_{i})
	\nn\\
	& + [\nabla_w r(\w_{i}) - \nabla_w r(\w_{i-d})]\tran(w^\star -  \w_{i})\nn\\
	&+\frac{1}{N}\sum_{n=1}^N\tw_{i}\tran[\nabla_z Q({\sz}^{(J)}_{n,i-d};\gamma_n) h_n \!-\!  \nabla_z Q(\bar{\sz}_{n,i-d};\gamma_n) h_n] \label{cross.term}
}}
where we define
\eq{
	\bar{\z}_{n,i-d} \define & \frac{1}{K}\mathds{1}_K\tran {\sz}^{(0)}_{n,i-d} {\color{black} \in \real} \\
	\bar{\sz}_{n,i-d}\define& {\rm col}\{\bar{\z}_{n,i-d},\cdots , \bar{\z}_{n,i-d}\}=\bar{\z}_{n,i-d}\mathds{1}_K  {\color{black} \in  \real^{K\times1}}
}
In step (a), we add and subtract the same term about $\nabla_z Q(\bar{\sz}_{n,i-d};\gamma_n)h_n$; In step (b), we exploit the property that
$\bar{\z}_{n,i-d} = h\tran_{n}\w_{i-d}$. Indeed, note that 
\eq{
	\bar{\z}_{n,i-d}  =& \frac{1}{K}\mathds{1}_K\tran {\sz}^{(0)}_{n,i-d} \nn\\
	\stackrel{\eqref{pipeline.input}}{=}& \frac{1}{K}\sum_{k=1}^K(Kh_{n,k}\tran \w_{i-d,k} + \u^{i-d}_{n,k} - \v^{i-d}_{n,k} )\nn\\
	\stackrel{\eqref{dd.unbiased}}{=}& \sum_{k=1}^Kh_{n,k}\tran \w_{i-d,k}\nn\\
	=& h_{n}\tran \w_{i-d} \label{equiv.z.bar}
}
{\color{black}
Now, to bound the first term in \eqref{cross.term}, we apply the result (\ref{cross.grad}) from Lemma \ref{lemma.1} so that we immediately get}
\eq{
	&\hspace{-5mm}[\nabla_w R(\w_{i-d}) - \nabla_w R(w^\star)]\tran(w^\star - \w_{i}) \nn\\
	\leq&  \frac{L+\eta}{2}\|\w_{i}-\w_{i-d}\|^2- \frac{\nu}{2}\|w^\star-\w_{i-d}\|^2- \frac{\nu}{2}\|\w_{i}-w^\star\|^2\nn\\
	\leq&  \frac{(L+\eta)d}{2}\sum_{j=i-d}^{i-1}\|\w_{j+1}-\w_{j}\|^2- \frac{\nu}{2}\|\tw_{i-d}\|^2- \frac{\nu}{2}\|\tw_{i}\|^2 \label{eq.result4}
}
{\color{black} and we have
\eq{
	 &[\nabla_w r(\w_{i}) - \nabla_w r(\w_{i-d})]\tran(w^\star -  \w_{i}) \nn\\
	 &\leq  \frac{\eta\epsilon}{2}\|\w_{i} - \w_{i-d}\|^2 + \frac{1}{2\epsilon} \|\tw_{i}\|^2\nn\\
	 &\leq  \frac{\eta d}{\nu}\sum_{j=i-d}^{i-1}\|\w_{j+1}-\w_{j}\|^2 + \frac{\nu}{4} \|\tw_{i}\|^2
}}\noindent
where the first inequality is due to Young's inequality $a\tran b \leq \frac{\epsilon}{2}\|a\|^2 + \frac{1}{2\epsilon}\|b\|^2$ and $\epsilon$ can be any positive number. And in second step we set $\epsilon=\frac{2}{\nu}$.

\noindent For the third term in \eqref{cross.term}, we obtain:
\eq{
	&\hspace{-4mm}\tw_{i}\tran[\nabla_z Q({\sz}^{(J)}_{n,i-d};\gamma_n) h_n - \nabla_z Q(\bar{\sz}_{n,i-d};\gamma_n) h_n]\nn\\
	\leq& \frac{\epsilon}{2}\|\tw_{i}\|^2+\frac{1}{2\epsilon}
	\Big\|\nabla Q({\sz}^{(J)}_{n,i-d};\gamma_n) h_n - \nabla Q(\bar{\sz}_{n,i-d};\gamma_n) h_n\Big\|^2\nn\\ 
	\leq& \frac{\epsilon}{2}\|\tw_i \|^2+\frac{\delta^2}{2\epsilon}
	\Big\|{\sz}^{(J)}_{n,i-d}  - \bar{\sz}_{n,i-d}\|^2\|h_n\|^2 \label{eq.result5}
}

Next, for the remaining terms in the main recursion \eqref{eq.recursion1}, we have the following bounds through the Lipschitz condition:
\eq{
	&\hspace{-5mm}\Ex_{\n'}\Big\|\big[\nabla_z Q(\sz^{(J)}_{\n',i-d}; \gamma_{\n'})- \nabla_z Q(h_{\n'}\tran w^\star; \gamma_{\n'}I_M)\big]h_{\n'}\Big\|^2\nn\\
	\leq\;&2\Ex_{\n'}\Big\|\big[\nabla_z Q(\sz^{(J)}_{\n',i-d}; \gamma_{\n'})- \nabla_z Q(\bar{\sz}_{\n',i-d}; \gamma_{\n'})\big]h_{\n'}\Big\|^2\nn\\
	&\;\;+2\Ex_{\n'}\Big\|\big[\nabla_z Q(h_{\n'}\tran \w_{i-d}; \gamma_{\n'})- \nabla_z Q(h_{\n'}\tran w^\star; \gamma_{\n'})\big]h_{\n'}\Big\|^2\nn\\
	\leq\;& \frac{2\delta^2}{N}\sum_{n=1}^{N}\|\sz^{(J)}_{n,i-d} -\bar{\sz}_{n,i-d}\|^2 \|h_n\|^2 +2L^2\|\tw_{i-d}\|^2 \label{eq.result1}
}
and
\eq{
	&\hspace{-7mm}\Ex_{\n'}\Big\|\big[\nabla_z Q({\su}^{i}_{\n'}; \gamma_{\n'})- \nabla_z Q(h_{\n'}\tran w^\star; \gamma_{\n'})I_M\big]h_{\n'}\Big\|^2\nn\\
	\leq\;&
	\Ex \|\nabla_z Q({\su}^{i}_{\n'}; \gamma_{\n'})- \nabla_z Q(h_{\n'}\tran w^\star; \gamma_{\n'})I_M\|^2\|h_{\n'}\|^2\nn\\
	\leq\;&\frac{\delta^2}{N}\sum_{n=1}^N\|{\su}^{i}_{n} - h_{n}\tran w^\star \mathds{1}_K\|^2\|h_n\|^2 \label{eq.result2}
}
and
\eq{
	&\hspace{-7mm}\frac{1}{N}\sum_{n=1}^N \Big\|[\nabla_z Q(\su^{i}_{n}; \gamma_{n})-\nabla_z Q(h_n\tran w^\star; \gamma_{n})I_M]h_{n}  \Big\|^2\nn\\
	\leq &\frac{\delta^2}{N}\sum_{n=1}^N\|{\su}^{i}_{n} - h_n\tran w^\star \mathds{1}_K\|^2\|h_n\|^2. \label{eq.result3}
}
and
\eq{
	4\mu^2 \Big\|[\nabla_w r(\w_i)-\nabla_w r(w^\star)  \Big\|^2\leq 4\mu^2\eta^2\|\tw_i\|^2\label{eq.result6}
}
Substituting (\ref{eq.result4}) -- (\ref{eq.result6}) into the main error recursion (\ref{eq.recursion1}), we have:
\eq{
	&\hspace{-3mm}\Ex_{\n'} [\|\tw_{i+1}\|^2 |\cF_{i}]\nn\\
	\leq & (1+4\mu^2\eta^2)\|\tw_{i}\|^2 +  \mu d(L+\eta+\frac{2\eta}{\nu} )\sum_{j=i-d}^{i-1}\|\w_{j+1}-\w_{j}\|^2\nn\\
	&\;\;{}-\nu\mu\|\tw_{i-d}\|^2- \frac{\mu\nu}{2}\|\tw_{i}\|^2+\mu\epsilon\|\tw_i \|^2\nn\\
	&\;\;{}+\left(\frac{\mu\delta^2}{\epsilon} + 8\mu^2 \delta^2\right)\frac{1}{N}\sum_{n=1}^N
	\|{\sz}^{(J)}_{n,i-d}  - \bar{\sz}_{n,i-d}\|^2\|h_n\|^2\nnb
	&\;\;{}+8\mu^2 L^2\|\tw_{i-d}\|^2+\frac{8\mu^2\delta^2}{N}\sum_{n=1}^N\|{\su}^{i}_{n} - h_n\tran w^\star \mathds{1}_K\|^2\|h_n\|^2	
}
Let $\epsilon = \frac{\nu}{6}$ and denote $L' \define L+\eta+\frac{2\eta}{\nu}$. Rearranging terms we get:
\eq{
	&\hspace{-3mm}\Ex_{\n'} [\|\tw_{i+1}\|^2 |\cF_{i}]\nn\\
	\leq & (1-\mu\nu/3+4\mu^2\eta^2)\|\tw_{i}\|^2+ \mu dL'\sum_{j=i-d}^{i-1}\|\w_{j+1}-\w_{j}\|^2\nn\\
	&\;\;{} - (\nu\mu - 8\mu^2L^2)\|\tw_{i-d}\|^2\!\nn\\
	&\;\;{}+ {\frac{8\mu\delta^2}{\nu N}}\sum_{n=1}^N
	\Big\|{\sz}^{(J)}_{n,i-d} \! -\! \bar{\sz}_{n,i-d}\Big\|^2\|h_n\|^2\nn\\
	&\;\;{}+\frac{8\delta^2\mu^2
	}{N}\sum_{n=1}^N\|{\su}^{i}_{n} - h_n\tran w^\star \mathds{1}_K\|^2\|h_n\|^2\label{recursion1}
}
where we relaxed $ \frac{6\mu\delta^2}{\nu} + 8\mu^2 \delta^2 $ into the upper bound $\frac{8\mu\delta^2}{\nu }$, which requires
\eq{
	\frac{8\mu\delta^2}{\nu} \geq \frac{6\mu\delta^2}{\nu} + 8\mu^2 \delta^2  \Longleftrightarrow \mu\leq \frac{1}{4\nu}
}
So far we have established \eqref{recursion1}. 
Next, we seek a recursion for the  inner difference:
\eq{
	&\hspace{-5mm}\Ex[\|\w_{i+1} - \w_{i}\|^2 | \cF_i] \nn\\
	= &\mu^2\Ex_{\n'}\Bigg[\Big\|\nabla_z Q({\sz}_{\n',i-d}^{(J)}; \gamma_{\n'})- \nabla_z Q(\su^i_{\n'}; \gamma_{\n'})\big]h_{\n'} \nn\\
	&\hspace{1cm}\;\;\;\;+\frac{1}{N}\sum_{n=1}^N \nabla_z Q(\su^{i}_{n}; \gamma_{n})h_{n} + \nabla_w r(\w_i)\Big\|^2\Big|\cF_{i}\Bigg]
}
Applying Jensen's inequality and combining results  \eqref{eq.result1}, \eqref{eq.result2}, \eqref{eq.result3}, and \eqref{eq.result6} we get:
\eq{
	&\hspace{-4mm}\Ex\left[\|\w_{i+1} - \w_{i}\|^2|\cF_{i} \right]\nn\\
	\leq&  \frac{8\mu^2\delta^2}{N}\sum_{n=1}^{N}\|\sz^{(J)}_{n,i-d} -\bar{\sz}_{n,i-d}\|^2 \|h_n\|^2\nn\\
	&\;\;\; + \frac{8\mu^2\delta^2}{N}\sum_{n=1}^N\|{\su}^{i}_{n} - h_n\tran w^\star \mathds{1}_K\|^2\|h_n\|^2 \nn\\
	&\;\;\;+8\mu^2L^2\|\tw_{i-d}\|^2  + 4\mu^2\eta^2\|\tw_i\|^2\label{recursion2}
}
Then, we derive the result:
\eq{
	&\hspace{-5mm}\|{\sz}_{i-d, n}^{(J)} - \bar{\sz}_{i-d, n}\|^2\nn\\
	= &
	\|{\sz}_{i-d, n}^{(J)} - \frac{1}{K}\one\one\tran{\sz}_{i-d, n}^{(J)}\|^2 \nn\\
	\stackrel{\eqref{dist.comb}}{=}&\left\|\left(I-\frac{1}{K}\one\one\tran\right)A^J(KH_n\tran \w_{i-d} + \su^{i}_{n} - \sv^{i}_{n})\right\|^2\nn\\
	\stackrel{(a)}{=}&\left\|\left(A^J-\frac{1}{K}\one\one\tran\right)(KH_n\tran \w_{i-d}- \sv^{i}_{n}+ \su^{i}_{n} -\bar{\su}^{i}_{n})\right\|^2\nn\\
	\stackrel{(b)}{\leq}&\lambda^{J}\| \su^{i}_{n} -\bar{\su}^{i}_{n}\|^2 + \frac{\lambda^{2J}}{1-\lambda^{J}}\left\|KH_n\tran \w_{i-d} - \sv^{i}_{n}\right\|^2
}
{\color{black} where $A^J$ arises from the definition of ${\sz}_{i-d, n}^{(J)} $, which is the stacked vector of $z_{i-d,n}^{(0)}$ after $J$-step consensus, and}
where we denote:
\eq{
	H_n \define \ba{ccc} h_{1,n}&&\\&\ddots&\\&&h_{K,n}\ea {\color{black} \in \real^{M\times K}}
}
Step (a) holds because $\left(A^J-\frac{1}{K}\one\one\tran\right)\bar{\su}^{i}_{n}=0$, and
\eq{
	\bar{\su}^{i}_{n}\define \frac{1}{K}\one\one_K\tran \su_{n}^i = \left(\frac{1}{K}\sum_{k=1}^k \u_{n,k}^i\right)	\one {\color{black} \in \real^{K\times1}}
}In step (b), $\lambda$ is the second largest eigenvalue of $A$.
Multiplying by $\|h_n\|^2$, taking expectation and averaging over $n$:
\eq{
	&\hspace{-4mm}\frac{1}{N}\sum_{n=1}^N \Ex\|{\sz}_{i-d, n}^{(J)} - \bar{\sz}_{i-d, n}\|^2\|h_n\|^2\nn\\
	\leq& \lambda^{J}\frac{1}{N}\sum_{n=1}^N\Ex\| \su^{i}_{n} -\bar{\su}^{i}_{n}\|^2\|h_n\|^2 \nn\\
	&\;\;+ \frac{\lambda^{2J}}{1-\lambda^{J}}\frac{1}{N}\sum_{n=1}^N\Ex\left\|KH_n\tran \w_{i-d} - \sv^{i}_{n}\right\|^2\|h_n\|^2 \label{recursion3}
}
Next, using the uniform sampling property, we establish a similar recursion for  $\su_n^i$:
\eq{
	&\hspace{-7mm}
	\frac{1}{N}\sum_{n=1}^N\Ex_{\n'}\Big[\|{\su}^{i+1}_{n} - \bar{\su}^{i+1}_{n}\|^2\|h_n\|^2\,\big|\,\cF_i\Big]\nn\\
	\stackrel{\eqref{update.u}}{=}& \frac{N-1}{N^2}\sum_{n=1}^N\|\su^{i}_n - \bar{\su}^{i}_{n}\|^2\|h_n\|^2\nn\\
	&\;\;+\frac{1}{N^2}\sum_{n=1}^N\|{\sz}^{(J)}_{n,i-d} - \bar{\sz}_{n,i-d}\|^2\|h_n\|^2\label{recursion4}
}
Similarly, we get:
\eq{&\hspace{-8mm}\Ex_{\n'}\left[\|{\su}^{i}_{n} - h_n\tran w^\star\one\|^2  |\cF_{i} \right] \nn\\
	\stackrel{\eqref{update.u}}{=}\;& \frac{1}{N}\| \sz^{(J)}_{n,i-d}- h_n\tran w^\star\one\|^2+ \frac{N-1}{N}\|{\su}^{i-1}_{n} - h_n\tran w^\star\one\|^2\nn\\
	\leq\;&\frac{2}{N}\| \sz^{(J)}_{n,i-d}- \bar{\sz}_{n,i-d}\|^2+ \frac{2}{N}\|\bar{\sz}_{n,i-d}- h_n\tran w^\star\one\|^2\nn\\
	&\;\; + \frac{N-1}{N}\sum_{n=1}^N\|{\su}^{i-1}_{n} - h_n\tran w^\star\one\|^2\nn\\
	\stackrel{\eqref{equiv.z.bar}}{=}\;&\frac{2}{N}\| \sz^{(J)}_{n,i-d}- \bar{\sz}_{n,i-d}\|^2+ \frac{2}{N}\|h_n\tran \widetilde{\w}_{i-d}\|^2\nn\\
	&\;\; + \frac{N-1}{N}\|{\su}^{i-1}_{n} - h_n\tran w^\star\one\|^2
}
Multiplying $\|h_n\|^2$, taking expectation over filtration and averaging over $n$:
\eq{
	&\hspace{-5mm}\frac{1}{N}\sum_{n=1}^N	\Ex \|{\su}^{i}_{n} - h_n\tran w^\star\one\|^2  \|h_n\|^2\nn\\
	&\leq \frac{2}{N^2}\sum_{n=1}^N\Ex\| \sz^{(J)}_{n,i-d}- \bar{\sz}_{n,i-d}\|^2\|h_n\|^2\nn\\
	&\;\;\;\;\;\;{}+\frac{2}{N^2}\sum_{n=1}^N\Ex\|h_n\|^4 \|\tw_{i-d}\|^2 \nn\\
	&\;\;\;\;\;\;{}\textsl{}+ \frac{N-1}{N^2}\sum_{n=1}^N\Ex\|{\su}^{i-1}_{n} - h_n\tran w^\star\one\|^2\|h_n\|^2\label{recursion5}
}
Lastly, we obtain:
\eq{
	&\hspace{-10mm}\Ex_{\n'}\left[\|KH_{n}\tran \w_{i-d} - \sv^{i}_{n}\|^2 |\cF_{i} \right]\nn\\
	=\;& \frac{1}{N}\|KH_n(\w_{i-d}-\w_{i-d-1})\|^2\nn\\
	&\;\;+ \frac{N-1}{N}\|KH_{n}\tran \w_{i-d}  - \sv^{i-1}_{n}\|^2 \nn\\
	\leq\;& \frac{1}{N}\|KH_n(\w_{i-d}-\w_{i-d-1})\|^2 \nn\\
	&\;\;+ \frac{N-1}{tN}\|KH_n\tran \w_{i-d-1} - \sv^{i-1}_n\|^2 \nn\\
	&\;\;{} + \frac{N-1}{(1-t)N}\|KH_{n}\tran (\w_{i-d} - \w_{i-d-1})\|^2  
}
If we choose $t = \frac{N-1}{N-1/2}$,  then
\eq{
	&\hspace{-6mm}\Ex_{\n'}\left[\|KH_{n}\tran \w_{i-d} - \sv^{i}_{n}\|^2|\cF_{i} \right]\nn\\
	\leq& \frac{1+2(N-1)^2}{N}\|KH_n(\w_{i-d}-\w_{i-d-1})\|^2 \nn\\
	&\;\;{}+ \frac{N-1/2}{N}\|KH_n\tran \w_{i-d-1} - \sv^{i-1}_n\|^2 
}
Notice that $\|H_n\|^2\leq\|H_n\|^2_F = \|h_n\|^2$. After multiplying by $\|h_n\|^2$, taking expectation over the filtration and averaging over $n$, we have
\eq{
	&\hspace{-3mm}\frac{1}{N}\sum_{n=1}^N\Ex\|KH_{n}\tran \w_{i-d} - \sv^{i}_{n}\|^2\|h_n\|^2\nn\\ \leq& \frac{K^2+2K^2(N-1)^2}{N^2}\sum_{n=1}^N\Ex\|(\w_{i-d}-\w_{i-d-1})\|^2\|h_n\|^4\nn\\
	&\;\;{}+ \frac{N-1/2}{N^2}\sum_{n=1}^N\Ex\|KH_n\tran \w_{i-d-1} - \sv^{i-1}_n\|^2\label{recursion6}
}

To simplify the notation, we introduce
\eq{
	a_{i}\define&\Ex\|\tw_{i}\|^2\\
	b_{i}\define& \Ex\|\w_{i} - \w_{i-1}\|^2\\
	c_{i} \define& \frac{1}{N}\sum_{n=1}^N\Ex\!\Big\|{\sz}^{(J)}_{n,i-d}  \!-\! \bar{\sz}_{n,i-d}\|^2\|h_n\|^2\\
	d_{i} \define& \frac{1}{N}\sum_{n=1}^N\Ex\|{\su}^{i}_{n} - \bar{\su}^{i}_{n}\|^2\|h_n\|^2\\
	e_{i} \define& \frac{1}{N}\sum_{n=1}^N\Ex\|{\su}^{i}_{n} - h_n\tran w^\star\one\|^2\|h_n\|^2 \\
	f_{i}\define& \frac{1}{N}\sum_{n=1}^N\Ex\|KH_{n}\tran \w_{i-d} - \sv^{i}_{n}\|^2\|h_n\|^2\\
	h^4 \define& \frac{1}{N}\sum_{n=1}^N \|h_n\|^4
} 
\noindent 
Using the above notation, we have established so far the following six recursions from \eqref{recursion1}, \eqref{recursion2}, \eqref{recursion3}, \eqref{recursion4}, \eqref{recursion5}, \eqref{recursion6}:
\eq{
	a_{i+1} \leq\;& (1-\mu\nu/3 + 4\mu^2\eta^2) a_i + \mu dL'\sum_{j=i-d}^{i-1} b_{j+1}\nn\\
	&\;\; -(\nu\mu - 8\mu^2L^2) a_{i-d}+\frac{8\mu\delta^2}{\nu} c_i + 8\mu^2\delta^2 e_i\label{recursion.a}\\
	b_{i+1} \leq\;& 8\mu^2\delta^2 c_i + 8\mu^2\delta^2 e_i+8\mu^2L^2a_{i-d} + 4\mu^2\eta^2a_i\label{recursion.b}\\
	c_{i} \leq\;& \lambda^J d_i + \frac{\lambda^{2J}}{1-\lambda^{J}}f_i\label{recursion.c}\\
	d_{i+1} =\;&\frac{1}{N} c_i +\frac{N-1}{N}d_{i}\label{recursion.d}\\
	e_{i} = \;&\frac{2}{N}h^4a_{i-d} + \frac{2}{N}c_i + \frac{N-1}{N} e_{i-1}\label{recursion.e}\\
	f_{i} \leq\;& \frac{N-1/2}{N}f_{i-1} + \frac{K^2+2K^2(N-1)^2}{N}h^4b_{i-d} \label{recursion.f}
}

\subsection{Linear Convergence of Error Recursion}
{\color{black}
Finally, we establish convergence of the PVRD$^2$ algorithm using Lemma \ref{lemma.exp} and mathematical induction.}
Assuming $b_i\leq \rho^iC_0$ for $ j \leq i-1$, where we set
\eq{\label{rho}
	\rho = \max\left(1 - \frac{1-\lambda^J}{2N}, 1-\mu\nu/5\right)
}
Then, from \eqref{recursion.f}, we have
\eq{
	f_{i} \leq& \frac{N-1/2}{N}f_{i-1} + \frac{K^2+K^2(N-1)^2}{N\rho^{d-1}}h^4 C_0 \rho^{i-1}\nn\\
	\leq& \rho^{i} \left(f_0 + \frac{1}{\rho - \frac{N-1/2}{N}}\frac{K^2+2K^2(N-1)^2}{N\rho^{d-1}}h^4C_0\right)\nn\\
	\define&\rho^{i} C_1
}
Substituting \eqref{recursion.c} into \eqref{recursion.d}, we have
\eq{
	d_{i+1} \leq& \frac{\lambda^J}{N}d_i + \frac{\lambda^{2J}}{N(1-\lambda^J)}f_i +\frac{N-1}{N}d_i\\
	=&\frac{\lambda^J+N-1}{N}d_i + \frac{\lambda^{2J}}{N(1-\lambda^J)}f_i\nn\\
	\leq&\left(1-\frac{1-\lambda^J}{N}\right)d_i + \frac{\lambda^{2J}}{N(1-\lambda^J)}\rho^i C_1\nn\\
	\leq& \rho^{i+1}\left(d_0 + \frac{1}{\rho - 1 + \frac{1-\lambda^J}{N} }\frac{\lambda^{2J}}{N(1-\lambda^J)}C_1\right)\nn\\
	\define&\rho^{i+1} C_2
}
From \eqref{recursion.c}:
\eq{
	c_i \leq& \lambda^J \rho^i C_2 + \frac{\lambda^{2J}}{1-\lambda^{J}}\rho^{i} C_1\nn\\
	=&\rho^i\left(\lambda^JC_2 + \frac{\lambda^{2J}}{1-\lambda^{J}} C_1\right)\nn\\
	\define&\rho^{i} C_3
}
Adding \eqref{recursion.a} and \eqref{recursion.e} with a positive coefficient $\gamma$, we have
\eq{
	a_{i+1} +\gamma e_{i} \leq\;& (1-\mu\nu/3+4\mu^2\eta^2) a_i + \mu dL'\sum_{j=i-d}^{i-1} b_{j+1}\nn\\
	&\;-(\nu\mu - 8\mu^2L^2) a_{i-d}+\frac{8\mu\delta^2}{\nu} c_i + 8\mu^2\delta^2 e_i\nn\\
	&\;  + \frac{2\gamma}{N}h^4a_{i-d} +  \frac{2\gamma}{N} c_i+ \gamma\frac{N-1}{N} e_{i-1}
}
We require that $\mu\leq \frac{\nu}{48\eta^2}$ such that
\eq{
	1-\mu\nu/3+4\mu^2\eta^2 \leq 1-\mu\nu/4
}
Rearranging terms:
\eq{
	&a_{i+1} + (\gamma-8\mu^2\delta^2 ) e_{i} \nn\\
	&\leq (1-\mu\nu/4) a_i + \mu dL'\sum_{j=i-d}^{i-1} b_{j+1}\nn\\
	&\;\;\;\;\;-\Big(\nu\mu - 8\mu^2L^2-\frac{2\gamma}{N}h^4\Big) a_{i-d}\nn\\
	&\;\; \;\;\;+\left(\frac{4\mu\delta^2}{\nu} + \frac{2\gamma}{N}\right) c_i + \gamma\frac{N-1}{N} e_{i-1} \nnb
	&= (1-\mu\nu/4) \left( a_i + \gamma\frac{N-1}{(1-\mu\nu/4)N}e_{i-1} \right) \nn\\
	&\;\;\;\;\;+\left(\frac{8\mu\delta^2}{\nu} + \frac{2\gamma}{N}\right)c_i -\Big(\nu\mu - 8\mu^2L^2-\frac{2\gamma}{N}h^4\Big) a_{i-d}\nn\\
	&\hspace{5mm} +\mu dL\sum_{j=i-d}^{i-1} b_{j+1}
}
We require that 
\eq{
	\gamma-8\mu^2\delta^2 =\,& \gamma\frac{N-1}{(1-\mu\nu/4)N} \nn\\
	\Longleftrightarrow\;\; [1 - \mu\nu N/4]\gamma =& 8\mu^2\delta^2N(1-\mu\nu/4)\nn\\
	\Longleftrightarrow\hspace{20mm}\gamma =& \frac{8\mu^2\delta^2N(1-\mu\nu/4)}{1 - \mu\nu N/4} = O(\mu^2)
}
We can further require that $\mu\le \frac{1}{2\nu N}$ so that $\gamma$ has the upper bound:
\eq{
	\gamma \leq 10\mu^2\delta^2N \leq 5\mu\frac{\delta^2}{\nu } \label{23g123e}
}
Let $s_{i+1} = a_{i+1}+(\gamma-8\mu^2\delta^2 ) e_{i}$, we have
\eq{
	s_{i+1} \leq (1-\mu\nu/3)s_i+ \mu d L\sum_{j=i-d}^{i-1} b_{j+1}+\left(\frac{8\mu\delta^2}{\nu} +\frac{2\gamma}{N}\right)c_i
}
where we discard the term about $a_{i-d}$, which requires:
\eq{
	&\hspace{-7mm}\nu\mu - 8\mu^2L^2-\frac{2\gamma}{N}h^4 \geq 0\nn\\
	\Longleftrightarrow\nu\mu \geq\,& 8\mu^2L^2+\frac{2\gamma}{N}h^4 \nn\\
	\stackrel{(\ref{23g123e})}{\Longleftarrow}\nu\mu \geq\,& 8\mu^2L^2 + {20\mu^2\delta^2}h^4\nn\\
	\Longleftrightarrow\,\;\mu \leq\,& \frac{\nu}{8L^2+20\delta^2 h^4}
}
After substituting \eqref{23g123e}, we have
\eq{
	s_{i+1} \leq& (1-\mu\nu/4)s_i+ \mu dL'\sum_{j=i-d}^{i-1} b_{j+1}+\left(\frac{8\mu\delta^2}{\nu} + \frac{10\mu\delta^2}{\nu N}\right)c_i\nn\\
	\leq& (1-\mu\nu/4)s_i + \mu dL' \sum_{j=i-d}^{i-1} \rho^{j+1} C_0+ \frac{18\mu\delta^2}{\nu}\rho^iC_3\nn\\
	\leq& (1-\mu\nu/4)s_i + \rho^{i} \left(\mu dL'\sum_{j=0}^{d-1} \rho^{-j} C_0 + \frac{18\mu\delta^2}{\nu} C_3\right)\nn\\
	\leq& \rho^{i+1} \!\!\left(\!s_0 + \frac{\mu}{\rho - 1 + \mu\nu/4}\left(dL'\sum_{j=0}^{d-1} \rho^{-j} C_0 \!+\! \frac{18\delta^2}{\nu} C_3\!\right)\!\!\right)\nn\\
	\define&\rho^{i+1} C_4
}
This also implies
\eq{
	a_{i} \leq \rho^i C_4
}
Revisiting \eqref{recursion.e}, we have
\eq{
	e_i\leq& \frac{N-1}{N} e_{i-1} + \rho^{i-d}\frac{h^4}{N}C_4\nn\\
	\leq&\rho^{i}\left(e_0 + \frac{1}{\rho -\frac{N-1}{N}}\frac{h^4}{\rho^{d+1}N}C_4\right)\nn\\
	\define&\rho^{i} C_5
}
Lastly, we substitute foregoing results into \eqref{recursion.b}:
\eq{
	b_{i+1} \leq& 8\mu^2 \delta^2 \rho^i C_3 + 8\mu^2\delta^2\rho^i C_5 + 8\mu^2L^2\rho^{i-d}C_4 + 4\mu^2\eta^2\rho^{i}C_4\nn\\
	=& \rho^{i}\mu^2\left(8\delta^2C_3 +8\delta^2C_5+\frac{C_4}{\rho^d}+4\eta^2C_4\right)
}
And we require the following to complete the mathematical induction:
\eq{\label{xhwej}
	\mu^2\left(6\delta^2C_3 + 6\delta^2C_5+\frac{C_4}{\rho^d}+4\eta^2C_4\right) < \rho C_0
}

In the following, we will show that $C_3$, $C_4$ and $C_5$ can be upper bounded by constants that are independent of step-size $\mu$. From \eqref{rho}, we have
\eq{
	\rho &\ge 1 - \frac{1 - \lambda^{J}}{2N}, \label{rho-lb-1}\\
	\rho &\ge 1 - \frac{\mu \nu}{4}. \label{rho-lb-2}
}
Moreover, since $\mu \le {1}/{2\nu N}$, from \eqref{rho-lb-2} we also have
\eq{
	\rho \ge 1 - \frac{\mu \nu}{4} \ge 1 - \frac{1}{8N}. \label{rho-lb-3}
}
With \eqref{rho-lb-1} and \eqref{rho-lb-2}, it holds that
\eq{
	\frac{1}{\rho - 1 + \frac{1}{2N}} &
	\overset{\eqref{rho-lb-1}}{\le} \frac{2N}{\lambda^J}, \label{rho-lbd-1}\\
	\frac{1}{\rho - 1 +\frac{1-\lambda^J}{N}} &\overset{\eqref{rho-lb-1}}{\le} \frac{2N}{1-\lambda^J}, \label{rho-lbd-2}\\
	\frac{\mu}{\rho - 1 + \mu\nu/4} &\overset{\eqref{rho-lb-2}}{\le} \frac{20}{\nu}, \label{rho-lbd-3}\\
	\frac{1}{N\rho -N+1} &\overset{\eqref{rho-lb-1}}{\le} \frac{2}{1+\lambda^J}. \label{rho-lbd-4}
}
Now we examine the upper bounds on $C_3$, $C_4$ and $C_5$. Note that
\eq{
	C_1 & = f_0 + \frac{1}{\rho - \frac{N-1/2}{N}}\frac{K^2+2K^2(N-1)^2}{N\rho^{d-1}}h^4C_0 \nnb
	& \overset{\eqref{rho-lbd-1}}{\le} f_0 + 2K^2 \frac{1 + 2(N-1)^2}{ \lambda^J\rho^{d-1}}h^4 C_0 \nnb
	& \overset{\eqref{rho-lb-3}}{\le} f_0 + 2K^2 \frac{1 + 2(N-1)^2}{\lambda^J(1-1/(8N))^{d-1}}h^4 C_0 = O(1),
}
and 
\eq{
	C_2 &= d_0 + \frac{1}{\rho - 1 + \frac{1-\lambda^J}{N} }\frac{\lambda^{2J}}{N(1-\lambda^J)}C_1 \nnb
	&\overset{\eqref{rho-lbd-2}}{\le} d_0 + \frac{2\lambda^{2J}}{(1-\lambda^J)^2} C_1 = O(1).
}
With $C_1$ and $C_2$, we have
\eq{
	C_3 = \lambda^J C_2 + \frac{\lambda^{2J}}{1-\lambda^J} C_1 \define C_3'= O(1)
}
Next we examine $C_4$:
\eq{
	C_4 &= s_0 + \frac{\mu}{\rho - 1 + \mu\nu/4}\left(dL'\sum_{j=0}^{d-1} \rho^{-j} C_0 + \frac{18\delta^2}{\nu} C_3\right) \nnb
	&\overset{\eqref{rho-lbd-3}}{\le} s_0 + \frac{20}{\nu}\left(\frac{d^2L'}{(1-\frac{1}{8N})^d} C_0 + \frac{18\delta^2}{\nu} C_3\right) \define C_4'= O(1).
}
For term $C_5$:
\eq{
	C_5 &= e_0 + \frac{h^4}{\rho^{d+1}(N\rho -N+1)}C_4 \nnb
	&\overset{\eqref{rho-lbd-4}}{\le} e_0 + \frac{2h^4}{\rho^{d+1}(1+\lambda^J)}C_4 \nnb
	&\le e_0 + \frac{2h^4}{(1-1/(8N))^{d+1}(1+\lambda^J)}C_4 \define C_5' = O(1).
}
Note that all constants $C_3'$, $C_4'$ and $C_5'$ are independent of $\mu$. Finally, note that 
\eq{
	&\hspace{-4mm}6\delta^2C_3 + 6\delta^2C_5+\frac{C_4}{\rho^d} + 4\eta^2C_4\nn\\ 
	\le &\ 6\delta^2C_3' + 6\delta^2C_5'+\frac{C_4'}{\rho^d}+ 4\eta^2C'_4 \nnb
	\le &\ 6\delta^2C_3' + 6\delta^2C_5'+\frac{C_4'}{(1-1/(8N))^d}+ 4\eta^2C'_4\nnb
	\define& B,
}
and $B$ is independent of step-size $\mu$. Also, we have have
\eq{
	\rho C_0 \ge \left(1- \frac{1}{8N}\right) C_0.
}
To prove \eqref{xhwej}, it is enough to choose $\mu$ such that
\eq{
	\mu^2 B \le \left(1- \frac{1}{8N}\right) C_0 \Longleftrightarrow \mu \le \sqrt{\left(1- \frac{1}{8N}\right)\frac{C_0}{B}}.
}
Combining with ${1}/{2\nu N}$, we can set $\mu$ as
\eq{
	\mu \le \min \left\{\sqrt{\left(1- \frac{1}{8N}\right)\frac{C_0}{B}}, \frac{1}{2\nu N} \right\}
}

\bibliographystyle{IEEEbib}
\bibliography{distributed_feature}
\end{document}